\newcommand{\appref}[1]{Appendix~\ref{#1}}
\newcommand*{\addFileDependency}[1]{
  \typeout{(#1)}
  \@addtofilelist{#1}
  \IfFileExists{#1}{}{\typeout{No file #1.}}
}
\newcommand*{\myexternaldocument}[1]{%
    \externaldocument{#1}%
    \addFileDependency{#1.tex}%
    \addFileDependency{#1.aux}%
}
\newcommand{\appref}[1]{section~\ref{#1} of the SI}
\begin{document}
\title{Decomposing information into copying versus transformation}





 \author{Artemy Kolchinsky$^1$ and Bernat Corominas-Murtra$^{2}$}
 \thanks{Author for correspondence: bernat.corominas-murtra@ist.ac.at}

  \affiliation{
  $^1$ Santa Fe Institute, 1399 Hyde Park Road, Santa Fe, NM 87501, USA\\
  $^2$ Institute of Science and Technology Austria, Am Campus 1, A-3400, Klosterneuburg, Austria}


\begin{abstract}
In many real-world systems, information can be transmitted in two qualitatively different ways: by {\em copying} or by {\em transformation}. {\em Copying} occurs when messages are transmitted without modification, e.g.,  when  an  offspring  receives  an  unaltered  copy  of  a  gene from its parent. {\em Transformation} occurs when messages are modified systematically during transmission, e.g., when mutational biases occur during genetic replication.  Standard information-theoretic measures do not distinguish these two modes of information transfer, although they  may  reflect  different mechanisms and have different functional consequences. Starting from a few simple axioms, we derive a decomposition of mutual information into the information transmitted by copying versus the information transmitted by transformation. We begin with a  decomposition that applies when the source and destination of the channel have the same set of messages and a notion of message identity exists. We then generalize our decomposition to other kinds of channels, which can involve different source and destination sets and broader notions of similarity. In addition, we show that copy information can be interpreted as the minimal work needed by a physical copying process, which is relevant for understanding the physics of replication. We use the proposed decomposition to explore a model of  amino acid substitution rates. Our results apply to any system in which the fidelity of copying, rather than simple predictability, is of critical relevance. 

\end{abstract}

  \maketitle


  \section{Introduction}

Shannon's information theory provides a powerful set of tools for quantifying and analyzing information transmission.  A particular measure of interest is \emph{mutual information}, which is the most common way of quantifying the amount of information transmitted from a source to a destination. Mutual information has fundamental interpretations and operationalizations in a variety of domains, ranging from telecommunications~\cite{Shannon1948, cover_elements_2012},  gambling and investment~\cite{kelly_new_1956,barron_bound_1988,cover_universal_1996}, biological evolution~\cite{donaldson-matasci_fitness_2010}, statistical physics~\cite{sagawa_second_2008,parrondo_thermodynamics_2015}, and many others.  
Nonetheless, it has long been observed~\cite{pierce1961introduction, Corominas-Murtra:2013} that mutual information does not distinguish between a situation in which the destination receives  a \emph{copy} of the source message versus one in which the destination receives some systematically \emph{transformed} version of the source message (where ``systematic'' refers to transformations that do not arise purely from noise). 

As an example of where this distinction matters, consider the transmission of genetic information during biological reproduction. When this process is modeled as a communication channel from parent to offspring,  the amount of transmitted genetic information is often quantified by mutual information~\cite{bergstrom_transmission_2011,penner_sequence_2011,simonetti_mistic:_2013,butte_mutual_1999,ramani_exploiting_2003}.  During replication, however, genetic information is not only copied but can also undergo systematic transformations in the form of nonrandom mutational biases. For instance, in the DNA of most organisms, $\texttt{A}\leftrightarrow \texttt{G}$ and $\texttt{C}\leftrightarrow \texttt{T}$ mutations occur more frequently than $\texttt{A}\leftrightarrow \texttt{C}$, $\texttt{A}\leftrightarrow \texttt{T}$, $\texttt{G}\leftrightarrow \texttt{C}$, and $\texttt{G}\leftrightarrow \texttt{T}$ mutations~\cite{li1984nonrandomness,yang1994estimating,graur_fundamentals_2000}. That means that some information about parent nucleotides is preserved even when those nucleotides undergo mutations. Mutual information does not distinguish which part of genetic information is transmitted by exact copying and which part is transmitted by mutational biases. However, these two modes of information transmission are driven by different mechanisms and have dramatically different evolutionary and functional implications, given that mutations are more likely to lead to deleterious consequences.  

The goal of this paper is to find a general decomposition of the information transmitted by a channel into contributions  from copying versus from transformation.
In \cref{fig:Channel_example}, we provide a schematic that visually illustrates the problem. Essentially, we seek a decomposition of transmitted information into copy and transformation that  distinguishes the example provided in (\cref{fig:Channel_example}a), where the copy is perfect, from the one provided in (\cref{fig:Channel_example}b), where the message has been systematically scrambled, from the one provided in (\cref{fig:Channel_example}c), where the channel is completely noisy. Of course, we want also such a decomposition to apply in less extreme situations, where part of the information is copied and part is transformed.  


The distinction between copying and transformation is important in many other domains beyond the case of biological reproduction outlined above.
For example, in many   models of animal communication and language evolution,  agents exchange signals across noisy channels and then  use these signals to try to agree on common  referents in the external world~\cite{seyfarth1980vervet,Hurford:1989, Nowak:1999, Cangelosi:2002,Komarova:2004,Niyogi:2006,Steels:2003,Corominas-Murtra:2013}. In such models, successful communication occurs when information is transmitted by copying; if signals are systematically transformed --- e.g., by scrambling --- the agents will not be mutually intelligible, even though mutual information between them may be high.  
As another example, the distinction between copying and transformation may be relevant in the study of information flow during biological development, where recent work has investigated the ability of regulatory networks to  decode development signals, such as positional information, from gene expression patterns~\cite{Petkova:2019}. In this scenario, information is copied when developmental signals are decoded correctly, and transformed when they are systematically  decoded in an incorrect manner. 
Yet other examples are provided by Markov chain models, which are commonly used to study computation and other dynamical processes in physics~\cite{van1992stochastic}, biology~\cite{de2002modeling} or sociology~\cite{sorensen1978mathematical}, among other fields. In fact, a Markov chain can be seen as a communication channel in which the system state transmits information from the past into the future.  In this context, copying occurs  when the system maintains its state constant over time (remains in fixed points) and transformation occurs when the state undergoes systematic changes (e.g., performs some kind of non-trivial computations).

Interestingly, while the distinction between copy and transformation information seems natural, it has not been previously considered in the information-theoretic literature.  This may  be partly due to the different roles that information theory has historically played: on one hand, a field of applied mathematics  concerned with the engineering problem of optimizing information transmission (its original purpose); on the other, 
a set of quantitative tools for describing and analyzing intrinsic  properties of real-world systems. 
Because of its origins in engineering, much of information theory --- including Shannon's channel-coding theorem, which established mutual information as a fundamental measure of transmitted information~\cite{shannon1959coding, ash_information_2012, cover_elements_2012} ---  is formulated under the assumption of  
an external agent who can appropriately encode and decode information for transmission across a given communication channel, in this way accounting for any transformations performed by the channel.    
However, in many real-world systems, there is no additional external agent who codes for the channel~\cite{Hopfield:1994, Corominas-Murtra:2013}, and one is interested in quantifying the ability of a channel to copy information without any additional encoding or decoding. 
This latter problem is the main subject of this paper.


\begin{figure*}
\begin{center}
\includegraphics[width=14.5cm]{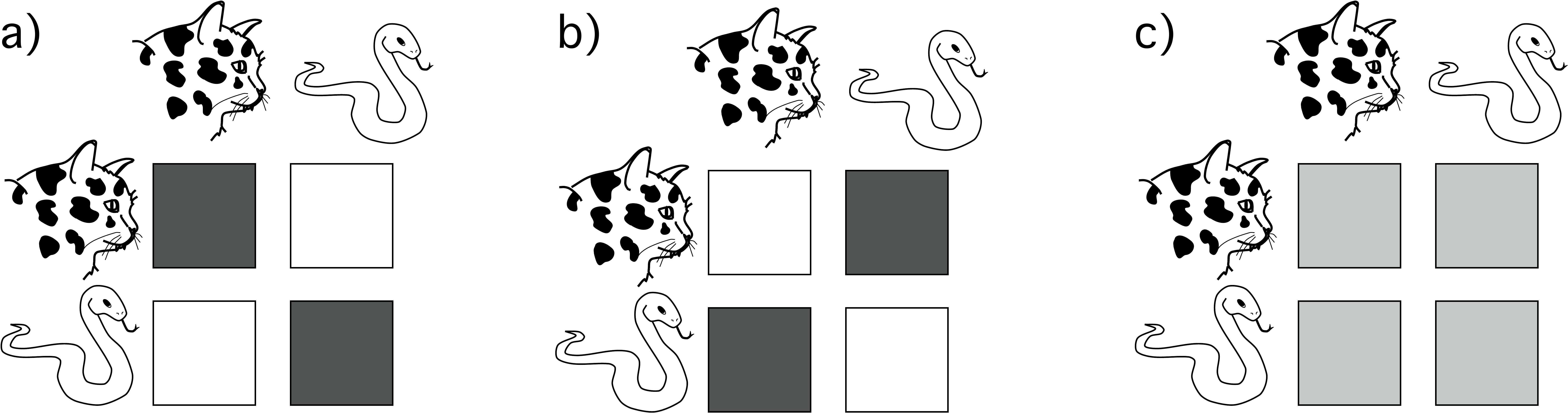}
\caption{
An illustration of the problem of copy and transformation. Consider three channels, each of which can transmit two messages, indicated by \emph{cat} and \emph{snake} (e.g., alarm calls in an animal communication system). In all panels, the rows indicate the message selected at the source, the columns indicate the message received at the destination, and the shade of the respective square indicates the conditional probability of the destination message given source message. 
For the channel in $(a)$, all information is copied: the channel maps \emph{cat}$\to$\emph{cat} and \emph{snake}$\to$\emph{snake} with probability 1. For the channel in $(b)$, all information is transformed: the channel maps \emph{cat}$\to$\emph{snake} and \emph{snake}$\to$\emph{cat} with probability 1. Note that for any source distribution, the mutual information between source and destination is the same in $(a)$ and $(b)$. The channel in $(c)$ is completely noisy: the probability of receiving a given message at the destination does not depend on the message selected at the source, and the mutual information between source and destination is $0$. Observe that transformation is different from noise, in that it still involves the transmission of information.}
\label{fig:Channel_example}
\end{center}
\end{figure*}

A final word is required to motivate our information-theoretic approach. It is standard to characterize the ability of a channel to copy messages via the ``probability of error''~\cite{cover_elements_2012}, which we indicate as $\epsilon$. In particular, $\epsilon$ is the probability that the destination receives a different message than the one that was sent by the source, while 
$1-\epsilon$ is the probability that the destination receives the same message as was sent by the source.  
However, for our purposes, this approach is insufficient.  First of all,
while $1-\epsilon$ quantifies the propensity of a channel to copy information,
$\epsilon$ does not quantify the  propensity to transmit information by transformation, since $\epsilon$  increases both in the presence of transformation and in the presence of noise (in other words, $\epsilon$ is high both in a channel like \cref{fig:Channel_example}b and a channel like \cref{fig:Channel_example}c).  Among other things, this means that $1-\epsilon$ and $\epsilon$ cannot be used to compute a channel's ``copying efficiency''  (i.e., which portion of the total  information transmitted across a channel is copied).  Second, and more fundamentally, $\epsilon$ and $1-\epsilon$ are not information-theoretic quantities, in the sense that they do not measure an \emph{amount} of information.  For instance, $1-\epsilon$ is bounded between 0 and 1 for all channels, whether considering a simple binary channel or a high-speed fiber-optic line. In the language of physics, one might say that $\epsilon$ is an intensive property, rather than an extensive one that scales with the size of the channel.   We instead seek measures which quantify the amount of copied and transformed information, and which can grow as the capacity of the channel under consideration increases.

In this paper, we present a decomposition of information that distinguishes copied from  transformed information.  We derive our decomposition by proposing four natural axioms that copy and transformation information should satisfy, and then identifying the unique measure that satisfies these axioms. 
Our resulting measure is  easy to compute and can be used to decompose either the total mutual information flowing across a channel, or the  specific mutual information corresponding to a given source message, or an even more general measure of acquired information called \emph{Bayesian surprise}.

The paper is laid out as follows. We present our approach in the next section. 
In \cref{sec:general}, we show that while our basic decomposition is defined for discrete-state channels where the source and destination share the same set of possible messages (so that the notion of ``exact copy'' is simple to define), our measures can be generalized to channels with different source and destination messages, to continuous-valued channels, and to other definitions of copying. We also discuss how our approach relates to \emph{rate-distortion} in information theory~\cite{cover_elements_2012}. 
In \cref{sec:TDcosts}, we show that our measure can be used to quantify the 
thermodynamic efficiency of physical copying processes, a central topic in biological physics. 
In \cref{sec:PAM}, we demonstrate our measures on a real world dataset of amino acid substitution rates.





\section{Copy and Transformation Information}

\label{sec:mainmeasure}

\subsection{Preliminaries}

\label{sec:Basics}

We briefly present some basic concepts from information theory that will be useful for our further developments.

We use the random variables $X$ and $Y$ to indicate the source and destination, respectively, of a communication channel (as defined in detail below). 
We assume that the source $X$ and destination $Y$ both take outcomes from the same countable set  
$\Alphabet$.  We use  $\SS$ to indicate the set of all probability distributions whose support is equal to or a subset of $\Alphabet$. 
We use notation like $\prior, \priorb, \dots \in \SS$ to indicate marginal distributions over $Y$, and $\post, \postb, \dots \in \SS$  to indicate   conditional distributions over $Y$, given the event $X=x$.  Where clear from context, we will simply write $p(y),q(y),\dots$ and $p(y\vert x), q(y\vert x), \dots$, and drop the subscripts. 

For some distribution $p$ over random variable $X$, we write the Shannon entropy as $H(p(X)):=-\sum_x p(x) \log p(x)$, or simply $H(X)$. 
For any two distributions $s$ and $q$ over the same set of outcomes, 
the {\em Kullback-Leibler} (KL) divergence is defined as
\begin{equation}
\DKL(s \Vert q):=\sum_x s(x)\log \frac{s(x)}{q(x)}\eqeol.
\label{eq:KL}
\end{equation}
KL is non-negative and equal to 0 if and only if $s(x)=q(x)$ for all $x$. It is infinite when the support of $s$ is not a subset of the support of $q$.
In this paper we will also make use of the KL between Bernoulli distributions  --- that is, distributions over two states of the type $(a, 1-a)$ --- which is sometimes called ``binary KL''.  We will use the notation $\bKL(a, b)$ to indicate the binary KL,
\begin{align}
\bKL(a, b) \coloneqq a\log\frac{a}{b} + (1-a) \log \frac{1-a}{1-b} \eqeol.
\label{eq:bKL}
\end{align}
We will in general assume that $\log$s are in base 2 (so information is measured in bits), unless otherwise noted. 

In information theory, a \emph{communication channel} specifies the conditional probability distribution of receiving different messages at a destination given  messages transmitted by a source. Let $p_{Y|X}(y\vert x)$ indicate such a conditional probability distribution. 
The amount of intrinsic noise in the channel, given some probability distribution of source messages $s_X(x)$, is the conditional Shannon entropy $H(Y|X):=-\sum_{x} s(x)\sum_y p(y|x) \log p(y|x)$. 
The amount of information transferred across a communication channel
is  quantified using the \emph{mutual information} (MI) between the source and the destination~\cite{cover_elements_2012},
\begin{align}
I_p\YX := \sum_x s(x) \sum_y p(y\vert x) \log \frac{p(y\vert x)}{p(y)} \eqeol ,
\label{eq:intromi}
\end{align}
where $p(y)$ is the marginal probability of receiving message $y$ at the destination, defined as
\begin{equation}
p(y):=\sum_x s(x) p(y\vert x)\eqeol.
\label{eq:pYy}
\end{equation}
When writing $I_p\YX$, we will omit the subscript $p$ indicating the channel where it is clear from context. 
MI is a fundamental measure of information transmission, and can be operationalized in numerous ways~\cite{cover_elements_2012}. It is non-negative, and large when (on average) the uncertainty about the message  at the destination decreases by a large amount, given the source message.
MI can also be written as a weighted sum of so-called \emph{specific MI}\footnote{The reader should be aware that the term ``specific MI'' has been used to refer to two different measures in the literature~\cite{deweese1999measure}.  The version of specific MI used here, as specified by \cref{eq:kl0}, is also sometimes called ``specific surprise''.} terms~\cite{deweese1999measure,butts_how_2003,wibral2015bits}, one for each outcome of $X$,
\begin{align}
I\YX = \sum_x s(x) I\Yx \eqeol,
\end{align}
where the specific MI for outcome $x$ is given by
\begin{align}
I\Yx & := \sum_y p(y\vert x) \log \frac{p(y\vert x)}{p(y)} = \DPP \eqeol.\label{eq:kl0}
\end{align}
Each $I\Yx$ indicates the contribution to MI arising from the particular source message $x$.
We will sometimes use the term \emph{total mutual information} (total MI) to refer to \cref{eq:intromi}, so as to distinguish it from specific MI.

Specific MI also has an important Bayesian interpretation.
Consider an agent who begins with a set of prior beliefs about $Y$, as specified by the prior distribution $\prior(y)$.  The agent then updates their beliefs conditioned on the event $X=x$, resulting in the posterior distribution $\post$.  The KL divergence between the posterior and the prior, $\DPP$ (\cref{eq:kl0}), is called \emph{Bayesian surprise}~\cite{itti_bayesian_2006}, and 
quantifies the amount of information acquired by the agent. It reaches its minimum value of zero, indicating that no information is acquired, if and only if the prior and posterior distributions match exactly. Bayesian surprise plays a fundamental role in Bayesian theory, including in the design of optimal experiments~\cite{lindley_measure_1956,stone_application_1959,bernardo_expected_1979,chaloner_bayesian_1995} and the selection of ``non-informative priors''~\cite{bernardo_reference_1979,berger_formal_2009}. Specific MI is a special case of Bayesian surprise, when the prior $p_Y$ is the marginal distribution at the destination, as determined by a choice of source distribution $s_X$ and channel $p_{Y|X}$ according to \cref{eq:pYy}.
In general, however, Bayesian surprise may be defined for any desired prior $p_Y$ and posterior distribution $p_{Y\vert x}$, without necessarily making reference to a source distribution $s_X$ and communication channel $p_{Y|X}$. 

Because Bayesian surprise is a general measure that includes specific MI as a special case, we will formulate our analysis of copy and transformation information in terms of Bayesian surprise, $\DPP$. 
Note that while the notation $\post$ implies conditioning on the event $X=x$, formally $\post$ can be any distribution whatsoever. 
Thus, we do not technically require that there exist some full joint or conditional probability distribution over $X$ and $Y$. Throughout the paper we will refer to the distributions $\post$ and $\prior$ as the ``posterior'' and ``prior''. 

Proofs and derivations are contained in the 
\ifappendix
appendices\else supplementary information (SI)\fi.



\subsection{Axioms for copy information}
\label{sec:AxiomsDI}

We propose that any measure of copy information should satisfy a set of four axioms.
Our setup is motivated in the following way.  
First, our decomposition should apply at the level of individual source message, i.e., we wish to be able to decompose each specific mutual information term (or more generally, Bayesian surprise) into a non-negative \emph{(specific) copy information} term and a non-negative  \emph{(specific) transformation information} term.  Second,  we postulate  that if there are two channels with the same marginal distribution at the destination, then the channel with the larger $p_{Y|X}(x|x)$ (probability of destination getting message $x$ when the source transmits message $x$) should have larger copy information for source message $x$ (this is, so to speak, our ``central axiom''). This postulate can also be interpreted in a Bayesian way. Imagine two Bayesian agents with the same prior distribution over beliefs, $p_Y$, who update their beliefs conditioned on the event $X=x$. We postulate that the agent with the larger posterior probability on $Y=x$ should have greater copy information.

Formally, we assume that each copy information term is a real-valued function 
of the posterior distribution, the prior distribution, and the source message $x$, written generically as $F(\post, \prior, x)$.
Given any measure of copy information $F$, the transformation information associated with message $x$ is then the remainder of $\DPP$  beyond $F$,
\begin{align}
\Ftrans(\post, \prior, x) := \DPP - F(\post, \prior, x) \eqeol .
\label{eq:ftrans}
\end{align}
We now propose a set of axioms that any measure of copy information $F$ should satisfy.

First, we postulate that copy information should be bounded between 0 and the Bayesian surprise, $\DPP$. Given \cref{eq:ftrans}, this guarantees that both $F$ and $\Ftrans$ are non-negative.
\begin{axiom} 
$F(\post, \prior, x) \ge 0$.
\label{axiom:Nonnegative}
\end{axiom}
\begin{axiom} 
$F(\post, \prior, x) \le \DPP$.
\label{axiom:UpperBound}
\end{axiom}
\noindent
Then, we postulate that copy information for source message $x$ should increase monotonically as the posterior probability of $x$ increases, assuming the prior distribution is held fixed (this is the ``central axiom'' mentioned above).
\begin{axiom} 
If $\postx \le \postbx$, then $F(\post, \prior, x) \le F(\postb, \prior,x)$.
\label{axiom:Monotonicity}
\end{axiom}
\noindent
In \appref{app:proofs}, we show that any measure of copy information that satisfies the above three axioms must obey  $F(\post,\prior,x)=0$ whenever $\postx \le \priorx$.  We also show that one particular measure of copy information, which is called $\Dcopy$ and is discussed in the next section, is the largest measure that satisfies the above three axioms.  However, the three axioms do not uniquely determine what happens when  $\postx  > \priorx$. This means that $\Dcopy$ is not unique, and in fact there are some trivial measures (such as  $F(\post,\prior,x)=0$ for all $\post$, $\prior$, and $x$) that also satisfy the above axioms.
Such trivial cases are excluded by our final axiom, which  states that for all prior distributions and all posterior probabilities $\postx  > \priorx$, 
there are posterior distributions that contain 
\emph{only} copy information.  As we'll see below, $\Dcopy$ is the unique satisfying measure once this axiom is added.
\begin{axiom} 
\label{axiom:purelycorrect}
For any $\prior$ and $c \in [\priorx, 1]$, there exists  a posterior distribution $\post$ such that $\postx = c$ and $F(\post, \prior,x)=\DPP$.
\end{axiom}


\subsection{The measure $\Dcopy$}

\label{sec:measures}

We now present $\Dcopy$, the unique measure 
that satisfies the four copy information axioms proposed in the last section. 
Given a prior distribution $\prior$,  posterior distribution $\post$, and source message $x$,  this measure  is defined as
\begin{multline}
\DcopyPP = \\
\begin{cases}\bKL(\postx , \priorx) & \text{if $\postx > \priorx$}\\
0 & \text{otherwise}
\end{cases}
\eqeol,\label{eq:DI2}
\end{multline}
where we have used the notation of \cref{eq:bKL}. 
We now state the main result of our paper:
\begin{thm}
$\Dcopy$ is the unique measure  
which satisfies \cref{axiom:UpperBound,axiom:Monotonicity,axiom:purelycorrect,axiom:Nonnegative}. 
\label{thm:unique}
\end{thm}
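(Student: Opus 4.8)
The plan is to prove the theorem in two parts: first to verify that $\Dcopy$ as defined in \cref{eq:DI2} satisfies all four axioms, and then to show that these axioms force any admissible $F$ to coincide with it. The engine behind both parts is the coarse-graining (data-processing) inequality for KL divergence: collapsing $Y$ into the two events $\{Y=x\}$ and $\{Y\neq x\}$ can only decrease the divergence, so that $\DPP \ge \bKL(\postx,\priorx)$ for every posterior, with equality precisely when $\post$ and $\prior$ induce the same conditional distribution on $\{Y\neq x\}$, i.e.\ when $\post(y)=\prior(y)\,(1-\postx)/(1-\priorx)$ for all $y\neq x$.

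Verification of the axioms is then quick. \Cref{axiom:Nonnegative} holds because binary KL is non-negative, and \cref{axiom:UpperBound} is exactly the coarse-graining inequality above (both bounds being trivial on the branch where $\DcopyPP=0$). For \cref{axiom:Monotonicity} I would note that $\bKL(a,\priorx)$, as a function of $a$, equals $0$ at $a=\priorx$ and is strictly increasing for $a>\priorx$, so $\Dcopy$ is non-decreasing in $\postx$ and continuous across the threshold $\postx=\priorx$. For \cref{axiom:purelycorrect}, given any $\prior$ and $c\in[\priorx,1]$, I would exhibit the equality-case posterior above, with $\postx=c$ and $\post(y)=\prior(y)(1-c)/(1-\priorx)$ for $y\neq x$; a short computation gives $\DPP=\bKL(c,\priorx)=\DcopyPP$, so all of the Bayesian surprise is copy information.

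For uniqueness, let $F$ be any measure satisfying the four axioms. The key structural observation is that \cref{axiom:Monotonicity}, applied in both directions, forces $F$ to depend on the posterior only through $\postx$: if $\postx=\postbx$ then $F(\post,\prior,x)\le F(\postb,\prior,x)$ and also the reverse, so the two are equal. Write this common value as $g(\postx)$ (depending implicitly on $\prior$ and $x$). Now fix $c=\postx>\priorx$ and sandwich $g(c)$. For the upper bound, evaluate $F$ at the equality-case posterior with $\postx=c$, whose surprise equals $\bKL(c,\priorx)$; \cref{axiom:UpperBound} then gives $g(c)\le\bKL(c,\priorx)$. For the lower bound, \cref{axiom:purelycorrect} supplies some posterior $\postb$ with $\postbx=c$ and $F(\postb,\prior,x)=\DKL(\postb\Vert\prior)$; since $g$ depends only on the $x$-coordinate, $g(c)=\DKL(\postb\Vert\prior)\ge\bKL(c,\priorx)$ by coarse-graining. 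Hence $g(c)=\bKL(c,\priorx)$ for $c>\priorx$. The remaining case $\postx\le\priorx$ follows from the result (\appref{app:proofs}) that the first three axioms force $F=0$ there; alternatively it is recovered directly, since \cref{axiom:Nonnegative} together with monotonicity and $\inf_{c>\priorx}\bKL(c,\priorx)=0$ pins $g$ to $0$ on $[0,\priorx]$. Together these show $F=\Dcopy$ identically.

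I expect the main obstacle to be matching the two bounds: the lower bound is immediate from \cref{axiom:purelycorrect}, but the upper bound $g(c)\le\bKL(c,\priorx)$ is \emph{not} delivered by \cref{axiom:UpperBound} applied to an arbitrary posterior, whose surprise can strictly exceed $\bKL$. It requires the collapse of $F$ to a function of $\postx$ via two-sided monotonicity and then evaluating at the minimal-surprise posterior. Equivalently, one may invoke the appendix result that $\Dcopy$ is the largest measure obeying the first three axioms to obtain $F\le\Dcopy$ for free, reducing uniqueness to the lower bound. Getting the data-processing equality condition and the explicit construction for \cref{axiom:purelycorrect} exactly right is the only genuinely computational step.
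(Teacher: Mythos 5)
Your proposal is correct and matches the paper's proof essentially step for step: the collapse of $F$ to a function of $\postx$ via two-sided monotonicity is the paper's \cref{lem:R-function-of-Pe}, your minimal-surprise posterior is exactly the construction in \cref{eq:alphapost}, and your sandwich argument (upper bound from \cref{axiom:UpperBound} evaluated at that posterior, lower bound from \cref{axiom:purelycorrect} combined with the data-processing inequality) reproduces \cref{prop:upperbound} and the paper's proof of the theorem. Your alternative handling of the case $\postx\le\priorx$ via $\inf_{c>\priorx}\bKL(c,\priorx)=0$ is a harmless variant of the paper's \cref{lem:weakchannel}, which instead compares directly with $F(\prior,\prior,x)\le\DKL(\prior\Vert\prior)=0$.
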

\noindent
In the \appref{app:satisfies} we demonstrate that $\Dcopy$ satisfies all the axioms, and in the \appref{app:proofs} we prove that it is the only measure that satisfies them. We further
show that if one drops \cref{axiom:purelycorrect}, then $\Dcopy$ is the largest possible measure that can satisfy the remaining axioms. 

Given the definition of $\Ftrans$ in \cref{eq:ftrans}, $\Dcopy$ also defines a non-negative measure of transformation information, which  we call $\Dtrans$,
\begin{align*}
\DtransPP = \DPP - \DcopyPP \eqeol .
\end{align*}

\subsection{Decomposing mutual information}
\label{sec:midecomp}

We now show that $\Dcopy$ and $\Dtrans$ allow for a decomposition of mutual information (MI) into  {\em MI due to copying} and {\em MI due to transformation}.   
Recall that MI can be written as an expectation over specific MI terms, as shown in \cref{eq:kl0}. Each specific MI term can be seen as a Bayesian surprise, where the prior distribution is  the marginal distribution at the destination (see \cref{eq:pYy}), and the posterior distribution is the conditional distribution of destination messages given a particular source message. Thus, our definitions of $\Dcopy$ and $\Dtrans$ provide a non-negative decomposition of each specific MI term,
\begin{align}
I_p\Yx = \DcopyPP + \DtransPP \eqeol.
\label{eq:Ip2}
\end{align}
In consequence, they also provide a non-negative
decomposition of the total MI into two non-negative terms: the \emph{total copy information} and the \emph{total transformation information}, 
\begin{align*}
I_p\YX = \Icopy_p\XrY + \Itransform_p\XrY \eqeol,
\end{align*}
where $\Icopy_p\XrY$  and  $\Itransform_p\XrY$ are given by
\begin{align}
\Icopy_p\XrY & := \sum_x s(x) \DcopyPP \eqeol, \label{eq:Icopytotal}\\
\Itransform_p\XrY & := \sum_x s(x) \DtransPP \eqeol. \label{eq:Itransformtotal}
\end{align}
(When writing $\Icopy$ and $\Itransform$, we will often omit the subscript $p$ where the channel is clear from context.) 
By a simple manipulation, we can also decompose the marginal entropy of the destination $H(Y)$ into three non-negative components:
\begin{align}
H(Y) = \Icopy\XrY + \Itransform\XrY + H({Y|X}) \eqeol .
\label{eq:threeway}
\end{align}
Thus, given a channel from $X$ to $Y$, the uncertainty in $Y$ can be written as the sum of the copy information from $X$, the transformed information from $X$, and the intrinsic noise in that channel from $X$ to $Y$.

For illustration purposes, we plot the behavior of $\Icopy$ and $\Itransform$ in the classical binary symmetric channel (BSC) in \cref{Fig:Binary_Ch}  (see caption for details). More detailed analysis of copy and transformation information in the BSC is discussed in \appref{app:bsc}.

It is worthwhile to point several important differences between our proposed measures and MI.

First, in the definitions of $\Icopy\XrY$ and $\Itransform\XrY$, the notation $\XrYnp$ indicates that 
$X$ is the source and $Y$ is the destination.  This is necessary because, unlike MI, $\Icopy$ and $\Itransform$ are in general non-symmetric, so it is possible that $\Icopy\XrY \ne \Icopy\YrX$, and similarly for $\Itransform$. We also note that the above form of $\Icopy$ and $\Itransform$, where they are written as sums over individual source message, is sometimes referred to as {\em trace-like} form in the literature, and is a commonly desired characteristic of information-theoretic functionals~\cite{Hanel:2011, Thurner:2017}.

Second, $\Icopy$ and $\Itransform$ do not obey the data processing inequality~\cite{cover_elements_2012}, and can either decrease or increase as the destination undergoes further operations.  In this respect, they are different from MI (the sum of $\Icopy$ and $\Itransform$). 
As an example, consider the case where channel $p_{Y|X}$ first transforms source message $X$ into an encrypted message $Y$, and then another channel $p_{X'|Y}$ decrypts $Y$ back into a copy of $X$ (so $X'=X$). In this example, $\Icopy(\gXrYnp{X}{X'}) > \Icopy\XrY$ even though the Markov condition $X - Y - X'$ holds.

Finally, unlike MI, $\Icopy$ and $\Itransform$ are generally non-additive when multiple independent channels are concatenated. As an  example, imagine that the source messages are bit strings of length $n$, which are transmitted through a product of $n$ independent channels,  $p(y|x) = \prod_i p_i(y_i|x_i)$.  If the source bits are independent, $s(x)=\prod_i s_i(x_i)$, it is straightforward to show that the MI between $X$ and $Y$ has the additive form $I\YX = \sum_i I(Y_i : X_i)$.  However, $\Icopy$ will generally not have this additive form, because copy information is defined in terms of the probability of exactly copying the entire source message (e.g., the entire $n$-bit long string).  Imagine that in the above example, one of the bit-wise channels carries out a bit flip, $p_i(x_i|y_i) = 1-\delta(x_i,y_i)$. In that case, the probability of receiving an exact copy of the source message at the destination is zero, and therefore $\Icopy$ is also zero regardless of the nature of the other bit-wise channels $p_j$ for $j\ne i$. If desired, it is possible to derive an additive version of $\Icopy$ by generalizing our measure with an appropriate ``loss function'', as discussed in more detail in \cref{sec:general} and \appref{app:general-vector}.

\definecolor{myorange}{rgb}{1,0.9,0.8}
\definecolor{myblue}{rgb}{0.8,0.85,1}

\begin{figure}
\begin{center}
\begin{tikzpicture}
\begin{axis}[xmin=0,xmax=1,ymin=-0.0,ymax=1, domain=0:1, samples=50,
xlabel={\textsf{error probability $\epsilon$}},
ylabel={\textsf{bits}},
y label style={at={(0.1,0.5)}},
ytick={0,0.5,1},
xtick={0,0.5,1},
legend style={at={(0.5,0.97)},
anchor=north},
height=5cm,
]
   \path[name path=axis] (axis cs:0,0) -- (axis cs:1,0);
  \addplot[color=black, line width=1pt] (x,{1+x*log2(x)+(1-x)*log2(1-x)});
  \addplot[name path=icopy, line width=0pt, forget plot]  (x,ifthenelse(x<0.5,{1+x*log2(x)+(1-x)*log2(1-x)}, 0);
  \addplot [
    fill=myblue, 
    fill opacity=1 
  ]
  fill between[
     of=icopy and axis,
     soft clip={domain=0:1},
  ];
 
  \addplot[name path=itrans, line width=0pt, forget plot] (x,ifthenelse(x>0.5,{1+x*log2(x)+(1-x)*log2(1-x)}, 0);

  \addplot [
    fill=myorange, 
    fill opacity=1,
  ]
  fill between[
     of=itrans and axis,
     soft clip={domain=0:1},
  ];  
  \legend{
  $I\YX$,
  $\Icopy\XrY$,
  $\Itransform\XrY$
  }
\end{axis}
\end{tikzpicture}
\caption{The Binary Symmetric Channel (BSC) with a uniform source distribution. We plot values of the MI $I\YX$, copy information $\Icopy\XrY$ (\cref{eq:Icopytotal}) and transformation information $\Itransform\XrY$ (\cref{eq:Itransformtotal}) for the BSC along the whole range of error probabilities $\epsilon\in[0,1]$. When $\epsilon \le 1/2$, all mutual information is $\Icopy$ (blue shading), when $\epsilon \ge 1/2$, all mutual information is $\Itransform$ (orange shading).}
\label{Fig:Binary_Ch}
\end{center}
\end{figure}

\subsection{Copying efficiency}
\label{sec:efficiency}


Our approach  provides a way to quantify which portion of the information transmitted across a channel is due to copying rather than transformation,  which we refer to as ``copying efficiency''.  Copying efficiency is defined at the level of individual source messages as 
\begin{align}
\eta_p(x) := \frac{\DcopyPP}{\DPP} \in [0,1] \eqeol,
\label{eq:eff0}
\end{align}
where the bounds come directly from \cref{axiom:Nonnegative,axiom:UpperBound}.  It can also be defined at the level of a channel as whole as
\begin{align}
\eta_p := \frac{\Icopy\XrY}{I\YX} \in [0,1] \eqeol.
\label{eq:eff1}
\end{align}
The bounds follow simply given the above results. 

For \cref{eq:eff0} and \cref{eq:eff1} to be useful efficiency measures, there should  exist channels which are either ``completely inefficient'' (have efficiency 0) or ``maximally efficient'' (achieve efficiency 1).  For the case of \cref{eq:eff0}, 
the bounds can be saturated because of \cref{axiom:purelycorrect}, which guarantees that for any source message $x$, prior $\prior$, and desired posterior probability $\postx \ge \priorx$, there exists a posterior $\post$ such that the Bayesian surprise $\DPP$ is 
composed entirely of copy information (for example, see \cref{eq:alphapost}).

One can show that the bounds in \cref{eq:eff1} 
can also be saturated.  
First, it can be verified that completely inefficient channels exist, since any channel which has $\postx \le \priorx$ for all $x \in \Alphabet$ will have $\Icopy\XrY = 0$  (note that such channels exist at all levels of mutual information).  We also show that maximally efficient channels exist, using the following result which is proved in \appref{app:purech}. 

\begin{restatable}{prop}{corpure}
\label{prop:dmi}
For any source distribution $s_X$ with $H(X) < \infty$, there exist channels $\post$ for all levels of  mutual information $I_p\YX \in[0,H(X)]$ such that $\Icopy_p\XrY = I_p\YX$.
\end{restatable}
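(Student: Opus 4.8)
The plan is to exhibit an explicit one-parameter family of channels that are all \emph{purely copying} (have $\Itransform_p\XrY = 0$), and whose total mutual information sweeps continuously across the whole interval $[0,H(X)]$. For a parameter $\alpha\in[0,1]$ I would take the channel
\[
p_\alpha(y\vert x) := \alpha\,\delta(y,x) + (1-\alpha)\,s(y),
\]
a mixture of exact copying (weight $\alpha$) and emission from the source distribution (weight $1-\alpha$). First I would record the induced destination marginal: substituting into \cref{eq:pYy} gives $p(y) = \alpha\,s(y) + (1-\alpha)\,s(y) = s(y)$, so the relevant prior is $\prior = s$ and $\priorx = s(x)$. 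The posterior mass on the copied outcome is then $\postx = \alpha + (1-\alpha)s(x) \ge s(x) = \priorx$, the gap being $\alpha(1-s(x)) \ge 0$, so the nontrivial branch of \cref{eq:DI2} applies.

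The key step is to show that each specific term is pure copy, i.e.\ $\DPP = \DcopyPP$, or equivalently $\DtransPP = 0$. I would verify this by direct evaluation: for every $y\ne x$ the posterior is the prior rescaled, since $p_\alpha(y\vert x) = (1-\alpha)s(y) = (1-\postx)\,\prior(y)/(1-\priorx)$. This is precisely the condition under which the binary coarse-graining ``$Y=x$ versus $Y\ne x$'' discards no divergence, so the sum defining $\DPP$ collapses term-for-term to $\bKL(\postx,\priorx)$, which is $\DcopyPP$ by \cref{eq:DI2}. Hence $\DtransPP = 0$ for every $x$, giving $\Itransform_p\XrY = 0$ and $\Icopy_p\XrY = I_p\YX$ for all $\alpha$ --- this is the concrete realization of the pure-copy posterior guaranteed abstractly by \cref{axiom:purelycorrect}.

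It then remains to range over mutual-information levels. At the endpoints, $\alpha=0$ gives $p_\alpha(y\vert x)=s(y)$ independent of $x$, hence $I_p\YX = 0$, while $\alpha=1$ gives the identity channel, hence $I_p\YX = H(Y) = H(X)$ (the destination marginal being $s$). Since $\alpha\mapsto I_p\YX$ is continuous on $[0,1]$, the intermediate value theorem produces, for each target value in $[0,H(X)]$, an $\alpha$ whose channel attains it; by the previous paragraph every such channel is purely copying. I expect the only real obstacle to be the continuity claim when $\Alphabet$ is countably infinite: each $\DPP$ is manifestly continuous in $\alpha$, but interchanging the limit with the series $\sum_x s(x)\DPP$ requires a summable dominating function. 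Here one can dominate the $x$-th summand by $-s(x)\log s(x)$ (using $\postx\le 1$ and $1-\postx\le 1-s(x)$ in the two terms of $\bKL$), whose total is the finite quantity $H(X)$, so dominated convergence applies and the rest is a short computation.
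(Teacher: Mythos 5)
Your proposal is correct and is essentially the paper's own proof: your mixture channel $p_\alpha(y\vert x)=\alpha\,\delta(y,x)+(1-\alpha)\,s(y)$ is exactly the family $p^\gamma$ the paper constructs (take $\gamma=\alpha$, $p_Y=s$, and $c(x)=\gamma+(1-\gamma)s(x)$ in \cref{eq:soln-prop1A}), and both arguments conclude identically via the term-wise pure-copy identity $\DPP=\bKL(\postx,\priorx)$, the endpoint values $0$ and $H(X)$, domination of the summands by $-s(x)\log s(x)$ (finite since $H(X)<\infty$), and the intermediate value theorem. The only cosmetic difference is that you write the channel directly as a mixture and compute its marginal by hand, whereas the paper reaches the same family as an explicit solution of the fixed-point equation from its auxiliary \cref{prop:aux}.
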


\cref{prop:dmi} shows that it is possible to achieve all values of total copy information, which is defined at the level of a channel.  Note that this proposition does not follow immediately from \cref{axiom:purelycorrect}, which is a statement about copy information at the level of a prior $\prior$ and posterior $\post$, where  no particular relationship between $\prior$ and $\post$ is assumed.



\section{Generalization and relation to rate-distortion}
\label{sec:general}

We now show that $\Dcopy$ can be written as a particular element among a broad family of copy information measures, which generalize the formal definition of what is meant by  ``copying''.

As we showed above, $\Dcopy$ is the unique measure that satisfies the four axioms proposed in \cref{sec:AxiomsDI}.  In particular, it satisfies \cref{axiom:Monotonicity}, which states that given the same prior $\prior$, copy information should be larger for $\postb$ than $\post$ whenever $\postbx \ge \postx$.  It also satisfies \cref{axiom:purelycorrect}, which states that there exist posterior distributions that have only copy information for all possible $\postx \in [\priorx, 1]$.

These axioms are based on one particular definition of copying, which states that copying occurs when the source and destination messages match perfectly.  In fact, this can be generalized to other definitions of copying and transformation by using a \emph{loss function} $\ell(x,y)$, which quantifies the dissimilarity between source message $x$ and destination message $y$.  For a given loss function, $\ell(x,y) = 0$ indicates that  $x$ and $y$ should be considered a perfect copy of each other, while $\ell(x,y) > 0$ indicates that $x$ and $y$ should be considered as somewhat different.  Importantly,  $\ell(x,y)$ can quantify similarity in a graded manner, so that $\ell(x,y') > \ell(x,y)$ indicates that $y$ is closer to being a copy of $x$ than $y'$ (even though neither $y$ nor $y'$ may be a perfect copy of $x$).

Given an externally-specified loss function $\ell(x,y)$, one can define  \cref{axiom:Monotonicity} and \cref{axiom:purelycorrect} in a generalized manner. The generalized version of \cref{axiom:Monotonicity} states that posterior distribution $\postb$ should have higher copy information than $\post$ whenever its expected loss is lower:
\begin{manualax}{\ref*{axiom:Monotonicity}$^\mathbf{*}$}
\label{ax:mono2}
If $\mathbb{E}_{\post}[\ell(x,Y)] \ge \mathbb{E}_{\postb}[\ell(x,Y)]$, then $F(\post, \prior, x) \le F(\postb, \prior,x)$.
\end{manualax}
\noindent The generalized version of \cref{axiom:purelycorrect} states that at all values of the expected loss which are lower than the expected loss achieved by $\prior$,
 there are channels which transmit information only by copying.
\begin{manualax}{\ref*{axiom:purelycorrect}$^\mathbf{*}$}
\label{ax:purely2}
For any $\prior$ and $c \in [\min_y \ell(x,y), \mathbb{E}_{\prior}[\ell(x,Y)]]$, there exists  a posterior distribution $\post$ such that $\mathbb{E}_{\post}[\ell(x,Y)] = c$ and $F(\post, \prior,x)=\DPP$.
\end{manualax}
\noindent Note that in defining \cref{ax:purely2}, we used that $\min_y \ell(x,y)$ is the lowest expected loss that can be achieved by any posterior distribution.

Each particular loss function induces its own measure of copy information. 
In fact, as we show in \appref{app:general-ax}, there is a unique measure of copy information which satisfies \cref{axiom:Nonnegative} and \cref{axiom:UpperBound}, as defined in \cref{sec:AxiomsDI}, plus the generalized axioms \cref{ax:mono2} and \cref{ax:purely2}, as defined here in terms of the loss function $\ell(x,y)$. This generalized measure of copy information has the following form: 
\begin{align}
\label{eq:genloss}
&\DcopyGenPP := \min_{r_Y} \DKL(r_Y \Vert \prior)\\
&\qquad\qquad\quad  \text{s.t.} \quad 
\mathbb{E}_{r_Y}[\ell(x,Y)] \le \mathbb{E}_{\post}[\ell(x,Y)] . \label{eq:genlosscons}
\end{align}
Recall that the KL divergence $\DKL(r_Y \Vert \prior)$ reflects the amount of information acquired by an agent in going from prior distribution $\prior$ to posterior distribution $r_Y$. Thus, 
$\DcopyGenPP$ quantifies the minimum information that must be acquired by an agent in order to match the copying performance of the actual posterior $\post$, as measured by the expected loss. 


\cref{eq:genloss} is an instance of a ``minimum cross-entropy'' problem, which 
is closely related to the ``maximum entropy'' principle~\cite{kullback1959information,kapur1992entropy,shore1981properties}. The distribution that optimizes \cref{eq:genloss} can be written in a simple form~\cite[pp.299-300]{rubinstein_simulation_2016},
\begin{align}
w(y) = {\frac{1}{Z(\lambda)}} p_Y(y) e^{-\lambda \ell(x,y)} \nonumber 
\end{align}
where $\lambda \ge 0$ is a Lagrange multiplier chosen so that the constraint in \cref{eq:genloss} is satisfied, and $Z(\lambda)=\sum_y p_Y(y) e^{-\lambda \ell(x,y)}$ is a normalization constant. Note that whenever $\mathbb{E}_{\post}[\ell(x,Y)] \ge \mathbb{E}_{\prior}[\ell(x,Y)]$, $\lambda =0$ and $w_Y = p_Y$~\cite{rubinstein_simulation_2016}. Otherwise, $\lambda > 0$ and the constraint in \cref{eq:genloss} will be tight up to equality.  In practice, \cref{eq:genloss} can be solved by sweeping across the 1-dimensional space of possible $\lambda \ge 0$ values (it can also be solved by standard convex optimization techniques).  Once $\lambda$ is determined, the value of copy information is given by
\begin{align*}
\DcopyGen = -\lambda \mathbb{E}_{\post}[\ell(x,Y)] - \log Z(\lambda) .
\end{align*}

It can be verified that $\Dcopy$, the measure derived above, corresponds to the special case $\ell(x,y) := 1-\delta(x,y)$, which is called ``0-1 loss'' in statistics~\cite{friedman2001elements} and ``Hamming distortion'' in information theory~\cite{cover_elements_2012}
 (see \appref{app:derivation}).  

The generalized measure $\DcopyGen$ has many similarities with $\Dcopy$. 
Like $\Dcopy$, it naturally leads to a non-negative  measure of generalized transformation information,
\begin{align}
\DtransGenPP = \DPP - \DcopyGenPP \,.
\label{eq:dtransgen}
\end{align}
$\DcopyGen$ can also be used to decompose total mutual information into (generalized) total copy and transformation information, akin to \cref{eq:Icopytotal} and \cref{eq:Itransformtotal}. Finally, one can use $\DcopyGen$ to define a generalized measure of copying efficiency, following the approach described in \cref{sec:efficiency}.

While we believe $\Dcopy$, as defined via the 0-1 loss function, is a simple and reasonable choice in a variety of applications, in some cases it may also be useful to consider other loss functions.  One important example is when the source and destination have different sets of outcomes. Recall that  $\Dcopy$ assumes that the source and destination share the same set of possible outcomes, $\Alphabet$. When this assumption does not hold, generalized measures of copy and transformation information can still be defined, as long as an appropriate loss function $\ell : \mathcal{X} \times \mathcal{Y} \to \mathbb{R}$ is provided (where $\mathcal{X}$ and $\mathcal{Y}$ indicate the outcomes of the source and destination, respectively).

Another important use case occurs when the loss function specifies continuously-varying degrees of {functional} similarity between source and destination messages.  For example, imagine that  $p_{Y|X}$ is an image compression algorithm which maps raw images $X$ to compressed outputs $Y$. Research in computer vision has developed sophisticated loss functions for image compression which correlate strongly with human perceptual judgments~\cite{wang2006modern}. By defining copy information in terms of such a loss function, one could measure how much perceptual information is copied by a particular image compression algorithm.

Our generalized approach can also be used to define copy and transformation information for random variables with continuous-valued outcomes.  The 0-1 loss function, as used in $\Dcopy$, is not very  meaningful for continuous-valued outcomes, since it depends on a measure-0 property of $\post$. 
A more natural measure of copy information is produced by the squared-error loss function $\ell(x,y):=(x-y)^2$, giving 
\begin{align*}
\min_{r_Y} \DKL(r_Y \Vert \prior) \;\;\; \text{s.t.}  \;\;\; \mathbb{E}_{r_Y}[(Y-x)^2] \le \mathbb{E}_{\post}[(Y-x)^2] .
\end{align*}
This particularly optimization problem has been investigated in the maximum entropy literature, and has been shown to be  particularly tractable when $p_Y$ belongs to an exponential family~\cite{altun2006unifying,dudik2006maximum,koyejo2013representation}.




Finally, it is also possible to generalize this approach to vector-valued loss functions $\ell : \mathcal{X} \times \mathcal{Y} \to \mathbb{R}^n$, which allow one to specify dissimilarity in a multi-dimensional way.  We discuss the relevant axioms and resulting copy information measure for vector-valued loss functions in \appref{app:general-vector}. We also demonstrate that vector-valued loss functions can be used to define measures of copy and transformation information that are additive for independent channels, in the sense discussed in \cref{sec:midecomp}.

After what we discussed so far, it is natural to briefly review the similarities between our generalized approach and \emph{rate-distortion theory}~\cite{cover_elements_2012}. In rate-distortion theory, one is given a distribution over source messages $s_X$ and a ``distortion function'' $\ell(x,y)$ which specifies the loss incurred when source message $x$ is encoded with destination message $y$. The problem is to find the channel $r_{Y|X}$ which minimizes mutual information without exceeding some  constraint on the expected distortion,
\begin{align}
\min_{r_{Y|X}} \DKL(r_{Y|X}\Vert r_Y) \quad \text{s.t.} \quad \mathbb{E}_r[\ell(X,Y)] \le \alpha \,,
\label{eq:rd}
\end{align}
where $\alpha$ is an externally-determined parameter. 
The prototypical application of rate-distortion is compression, i.e., to find a compression channel $r_{Y|X}$ that has both low mutual information  and low expected distortion.  
As can be seen by comparing \cref{eq:genloss} and \cref{eq:rd}, the optimization problem considered in our definition of generalized copy information and the optimization found in rate-distortion are quite similar: they both involve minimizing a KL divergence subject to an expected loss constraint.   
Nonetheless, there are some important differences. 
First and foremost, the goals of the two approaches are different. In our approach, the aim is to decompose the information transmitted by a fixed externally-specified channel into copy and transformation. In rate-distortion, there is no externally-specified channel and the aim is instead to find an optimal channel \emph{de novo}. 
Second, our approach is motivated by a set of axioms  which postulate how a measure of copy information should behave, rather than from channel-coding considerations which are used to derive the optimization problem in rate-distortion~\cite{cover_elements_2012}.  Lastly, copy information is defined in a point-wise manner for each source message $x$, rather than for an entire set of source messages at once, as it is rate-distortion. 

We finish by noting that one can also define \cref{eq:genloss} in a channel-wise manner (by minimizing $\DKL(r_{Y|X}\Vert r_Y)$, as in \cref{eq:rd}) rather than a pointwise manner (minimize $\DKL(r_{Y|X=x}\Vert p_Y)$, as in \cref{eq:genloss}). Under that formulation, one could no longer decompose specific MI into non-negative copy and information terms, though total MI could still be decomposed in that way. Interestingly, this alternative formulation would become equivalent to the so-called \emph{minimum information principle}, a previous proposal for quantifying how much information about source messages is carried by  different properties of destination messages~\cite{globerson2009minimum}.

\section{Thermodynamic costs of copying}
\label{sec:TDcosts}
\def\kB{k}
\def\kBT{\kB T}

Given the close connection between information theory and statistical physics, many information-theoretic quantities can be interpreted in thermodynamic terms~\cite{parrondo_thermodynamics_2015}. As we show here, this includes our proposed measure of copy information, $\Dcopy$. 
Specifically, we will show that $\Dcopy$ reflects the minimal amount of thermodynamic work necessary to copy a physical entity such as a polymer molecule. This latter example emphasizes the difference between information transfer by copying versus by transformation in a fundamental, biologically-inspired physical setup. 

Consider a physical system coupled to a heat bath at temperature $T$, and which is initially in an equilibrium distribution $\pi(i) \propto e^{-E(i)/(\kBT)}$ with respect to some Hamiltonian $E$ ($\kB$ is Boltzmann's constant).  Now imagine that the system is driven to some non-equilibrium distribution $p$ by a physical process, and that by the end of the process the Hamiltonian is again equal to $E$. The minimal amount of work required by any such process is related to the KL divergence between $p$ and $\pi$~\cite{esposito2010three},
\begin{align}
\label{eq:klwork}
W \ge 
\kBT \; \DKL(p \Vert \pi) \eqeol.
\end{align}
The limit is achieved by thermodynamically-reversible processes. 
(In this subsection, in accordance with the convention in physics, we assume that all logarithms are in base $e$, so information is measured in nats.)

Recent work has analyzed the fundamental thermodynamic constraints  on copying in a physical system, for example for an information-carrying polymer like DNA~\cite{ouldridge_fundamental_2017,poulton_non-equilibrium_2018}.  Here we will generally follow the model described in \cite{ouldridge_fundamental_2017}, while using our notation and omitting some details that are irrelevant for our purposes (such as the microstate/macrostate distinction). In this model, the source $X$ represents the state of the original system (e.g., the polymer to be copied), and the destination $Y$ represents the state of the replicate (e.g., the polymer produced by the copying mechanism). 
We make several assumptions. First, the source $X$ is not modified during the copying process. Second, $X$ and $Y$ have the same Hamiltonian before and after the copying process.  Finally, we follow~\cite{ouldridge_fundamental_2017} in assuming that $Y$ is a \emph{persistent} copy of $X$, meaning that before and after the copying process, $Y$ is physically separated from $X$ and there is no interaction energy between them.  This does not preclude $X$ and $Y$ from coming into contact and interacting energetically during intermediate stages of the copying process (for instance by template binding). The assumption of persistent copying means that there are no unaccounted energetic costs involved in preparing the copying system and transporting the produced replicate (e.g., moving the replicate $Y$ to a daughter cell). 

Assume that $Y$ starts in the equilibrium distribution, indicated as $\pi_Y$ (note that by our persistent copy assumption, the equilibrium distribution cannot depend on the state of $X$).  Let $\postx$ indicate the conditional distribution of replicates  after the end of the copying process, where $x$ is the state of the original system $X$. 
Following \cref{eq:klwork}, the minimal work required to 
bring $Y$ out of equilibrium and produce replicates according to $\postx$ is given by
\begin{align}
\label{eq:minworkx}
W(x) \ge \kBT\; \DKL( \post \Vert \pi_Y ) \eqeol.
\end{align}

Note that \cref{eq:minworkx} specifies the minimal work required to create the overall distribution $\post$.  However, in many real-world scenarios, likely including DNA copying, the primary goal is to create exact copies of the original state, not  transformed versions it (such as nonrandom mutations). 
That means that for a given source state $x$, the quality of the replication process can be quantified by the probability of making an exact copy, $\postx$.  We can now ask: what is the {minimal} work required by a physical replication process whose probability of making exact  copies is at least as large as  $\postx$? To make the comparison fair, we require that the process begin and end with the same equilibrium distribution, $\pi_Y$. 
The answer is given by the minimum of the RHS of \cref{eq:minworkx} under a constraint on the exact-copy yield, which is exactly proportional to $\Dcopy$: 
\begin{align}
W^\text{exact}_\text{min}(x) & = \kBT \; \left[ \min_{r_Y : r_Y(x) \ge \postx} \DKL( r_Y \Vert \pi_Y )\right] \label{eq:minKLwork0}
 \\
& = \kBT \; \Dcopy(\post \Vert \pi_Y) \eqeol,
\label{eq:minKLwork}
\end{align}
where \cref{eq:minKLwork} follows from \appref{app:derivation}. 
The additional work that is expended by the replication process is then lower bounded by a quantity proportional $\Dtrans$,
\begin{align}
W(x) - W^\text{exact}_\text{min}(x) \ge \kBT  \Dtrans(\post\Vert \pi_Y) \,. \label{eq:excess1}
\end{align}
This shows formally the intuitive idea that transformation information contributes to thermodynamic costs but not to the accuracy of correct copying.

In most cases, a replication system is designed for copying not just one source state $x$, but an entire ensemble of source states (for example, the DNA replication system can copy a huge ensemble of source DNA sequences, not just one). Assume that $X$ is distributed according to some $s_X(x)$.  Across this ensemble of source states, the minimal amount of \emph{expected} thermodynamic work required to produce replicates according to conditional distribution $p_{Y|X}$ is given by
\begin{align}
\langle W\rangle & \ge   \kBT \; \sum_x s(x) \DKL(p_{Y|x} \Vert \pi_Y) \\
& = \kBT \left[ I_p\YX + \DKL(p_Y\Vert \pi_Y) \right] \eqeol.
\end{align}
Since KL is non-negative, the minimum expected work is lowest when the equilibrium distribution $\pi_Y$ matches the marginal distribution of replicates, $p_Y(y) = \sum_x s(x) p(y|x)$. 
Using similar arguments as above, we can ask about the minimum expected work 
required to produce replicates, assuming each source state $x$ achieves an exact-copy yield of at least $\postx$.  This turns out to be the expectation of \cref{eq:minKLwork},
\begin{align}
\langle W^\text{exact}_\text{min} \rangle &=   \kBT \; \sum_x s(x) \Dcopy(\postx \Vert \pi_Y) \\
& =  \kBT \left[\Icopy_p\XrY + \DKL(p_Y\Vert \pi_Y) \right] \eqeol.
\end{align}
The additional expected work that is needed by the replication process, above and beyond an optimal process that achieves the same exact-copy yield, is lower bounded by the transformation information,
\begin{align}
\langle W \rangle - \langle W^\text{exact}_\text{min} \rangle \ge \kBT \Itransform_p\XrY \,.
\end{align}
When the equilibrium distribution $\pi_Y$ matches the marginal distribution $p_Y$, $\langle W^\text{exact}_\text{min} \rangle$ is exactly equal $\kBT \Icopy$.  Furthermore, in this special case the thermodynamic efficiency of exact copying, defined as the ratio of minimal work to actual work, becomes equal to the information-theoretic copying efficiency of $p$, as defined in \cref{eq:eff1}:
\begin{align}
\frac{\langle W^\text{exact}_\text{min} \rangle}{\langle W \rangle} = \frac{\Icopy_p\XrY }{ I_p\YX} = \eta_p \eqeol.
\end{align}
 
As can be seen, standard information-theoretic measures, such as \cref{eq:minworkx}, bound the minimal thermodynamic costs of transferring information from one physical system to another, whether that transfer happens by copying or by transformation.  However, as we have argued above, the difference between copying and transformation is essential in many biological scenarios, as well as other domains.  In such cases,  $\Dcopy$ arises naturally as the minimal thermodynamic work required to replicate information  by copying.

Concerning the example of DNA copying that we discussed  throughout this section, our results should be interpreted with some care. 
We have generally imagined that the source system represents the state of an entire polymer, e.g., the state of an entire DNA molecule,  and that the probability of exact copying refers to the probability that the entire sequence  is reproduced without any errors.  
Alternatively, one can use the same framework to consider probability of copying a single monomer in a long polymer (assuming that the thermodynamics of polymerization can be disregarded), as might be represented for instance by a single-nucleotide DNA substitution matrix~\cite{yang1994estimating}, as analyzed in the last section. Generally speaking,  $\Dcopy$ computed at the level of single monomers will be different from $\Dcopy$ computed at the level of entire polymers, since the probability of exact copying means different things in these two formulations. 






\section{Copy and transformation in amino acid substitution matrices}
\label{sec:PAM}

\begin{figure}
\centering
\includegraphics[width=1\linewidth]{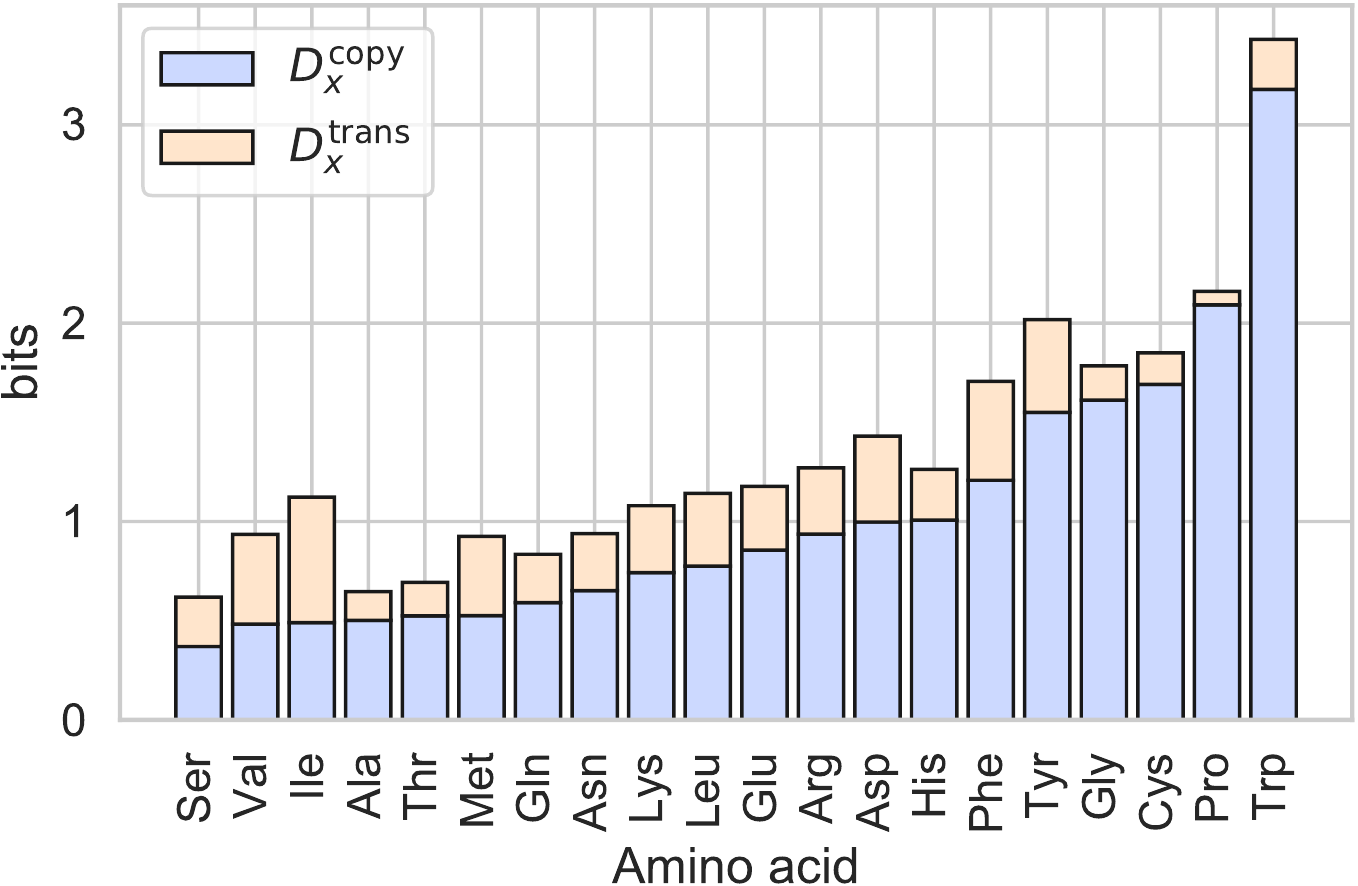}
\caption{
Copy and transformation information for different amino acids, based on an empirical PAM matrix~\cite{le2008improved}. We show magnitude of $\DcopyPP$ in blue; on top of this is the amount of transformation information in orange. The sum of both is the specific MI for each amino acid, according to the decomposition given in \cref{eq:Ip2}.
}
\label{fig:aa}
\end{figure}

In the previous section, 
we saw how $\Dcopy$ and $\Icopy$ arise naturally when studying the fundamental limits on the thermodynamics of copying, which includes the special case of replicating information-bearing polymers.  
Here we demonstrate how these measures can be used to characterize the information-transmission properties of a real-world biological replication system, as formalized by a communication channel $p_{Y|X}$ from parent to offspring~\cite{yang1994estimating,le2008improved}. 
%
In this context, we show how $\Icopy$ can be used to quantify precisely how much  
information is transmitted by copying, without  mutations. 
At the same time, we will use $\Itransform$ to quantify how much information is transmitted by transformation, that is by systematic \emph{nonrandom} mutations that carry information but do not preserve the identity of the original message~\cite{li1984nonrandomness,yang1994estimating,graur_fundamentals_2000}. We also quantify the effect of purely-random mutations, which correspond to the conditional entropy of the channel, $H({Y|X})$.  


We demonstrate these measures on empirical data of \emph{point accepted mutations} (PAM) of amino acids.  PAM data represents 
the rates of substitutions between different amino acids during the course of biological evolution, and has various applications, including evolutionary modeling, phylogenetic reconstructions, and protein alignment~\cite{le2008improved}.  We emphasize that amino acid PAM matrices do not reflect the direct physical transfer of information from protein to protein, but rather the effects of underlying processes of DNA-based replication and selection, followed by translation.

Formally, an amino acid PAM matrix $Q$ is a continuous-time rate matrix. $Q_{yx}$ represents the instantaneous rates of substitutions from amino acid $x$ to amino acid $y$, where  both $x$ and $y$ belong to $\Alphabet = \{1,\dots,20\}$, representing the 20 standard amino acids.  
We performed our analysis on a particular PAM matrix $Q$ which was published by Le and Gascuel~\cite{le2008improved} (this matrix was provided by the \texttt{pyvolve} Python package~\cite{spielman2015pyvolve}). 
We calculated a discrete-time conditional probability distribution $p_{Y|X}$ from this matrix by computing the matrix exponential $p_{Y|X} = \exp(\tau Q)$. Thus, $p(y|x)$ represents the probability that amino acid $x$ is replaced by amino acid $y$ over time scale $\tau$.  For simplicity we used timescale $\tau=1$. 
We used the stationary distribution of $Q$ as the source distribution $s_X$, which correlates closely with empirically-observed amino acid frequencies~\cite[Fig.~1]{le2008improved}.  
Using the decomposition presented in \cref{eq:Itransformtotal}, we arrived at the following values for the communication channel described by the conditional probabilities $p_{Y|X}$:
 \[
 I\YX = \Icopy\XrY + \Itransform\XrY\approx 1.2\;{\rm bits} \eqeol,
 \]
 where 
 \begin{align}
 \Icopy\XrY& = \sum_x s(x) \DcopyPP \approx 0.88 \; {\rm bits}\eqeol, 
 \nonumber\\
 \Itransform\XrY & = \sum_x s(x) \DtransPP \approx 0.32\;{\rm bits}\eqeol. \nonumber
 \end{align}
We also computed the intrinsic noise for this channel (see \cref{eq:threeway}),
\[
H({Y|X}) = \sum_x s(x) H(Y|X=x) \approx 2.97 \;{\rm bits}\eqeol.
\]
Finally, we computed the specific copy and transformation information, $\Dcopy$ and $\Dtrans$, for different amino acids. The results  are shown in \cref{fig:aa}. We remind the reader that the sum of $\DcopyPP$ and $\DtransPP$ for each amino acid $x$ --- that is, the total height of the stacked bar plots in the figure --- is equal to the specific MI $I\Yx$ for that $x$, as explained in the decomposition of \cref{eq:Ip2}.

While we do not dive deeply in the biological significant of these results, we highlight several interesting findings.  First, for this PAM matrix and timescale ($\tau=1$), a considerable fraction of the information ($\approx 1/4$) is transmitted not by copying but by non-random mutations.  Generally, such non-random mutations  represent underlying physical, genetic, and biological constraints that allow some pairs of amino acids to substitute each other more readily than other pairs. 

Second, we observe considerable variation in the amount of specific MI, copy information, and transformation between different amino acids, as well as different ratios of copy information to transformation information.  In general,  amino acids with more copy information are conserved unchanged over evolutionary timescales. At the same time, it is known that conserved amino acids tend to be ``outliers'' in terms of their physiochemical properties (such as hydrophobicity, volume, polarity, etc.), since mutations to such outliers are likely to alter protein function in deleterious ways~\cite{graur1985amino,yang1998models}. 
To analyze this quantitatively, we used Miyata's measure of distance between amino acids, which is based on differences in volume and polarity~\cite{miyata1979two}. For each amino acid, we quantified its degree of ``outlierness'' in terms of its mean Miyata distance to all 19 other amino acids.  The Spearman rank correlation between this outlierness measure and copy information (as shown in \cref{fig:aa}) was 0.57 ($p=0.009$).  On the other hand, the rank correlation between outlierness and transformation information was 0.22 ($p=0.352$).  Similar results were observed for other chemically-motivated measures of amino acid distance, such as Grantham's distance~\cite{grantham1974amino} and Sneath's index~\cite{sneath1966relations}. This demonstrates that amino acids with unique chemical characteristics tend to have more copy information, but not more transformation information.

\section{Discussion}
\label{sec:discussion}

Although mutual information is a very common and successful measure of transmitted information, it is  insensitive to the  distinction between information that is transmitted by copying versus information that is transmitted by transformation. 
Nonetheless, as we have argued, this distinction  is of fundamental importance in many real-world systems. 

In this paper we propose a rigorous and practical way to decompose specific mutual information, and more generally Bayesian surprise, into two non-negative terms corresponding to copy and transformation,  $I=\Icopy+\Itransform$. 
We derive our decomposition using an axiomatic framework: we propose a set of four axioms that any measure of copy information should obey, and then identify the unique measure that satisfies those axioms.  
At the same time, we show that our measure  of copy information is one of a family of functionals, each of which corresponds to a different way of quantifying error in transmission. 
We also demonstrate that our measures have a natural interpretation in thermodynamic terms, which 
suggests novel approaches for understanding the thermodynamic efficiency of biological replication processes, 
 in particular DNA and RNA duplication.  Finally, we demonstrate our results on real-world biological data, exploring  copy and transformation information of amino acid substitution rates. We find significant variation among the amount of information transmitted by copying vs. transformation among different amino acids.


Several directions for future work present themselves. 

First, there is a large range  of  practical and theoretical application of our measures, from analysis of biological and neural information transmission to the study of the thermodynamics of self-replication, a fundamental and challenging problem in biophysics~\cite{Corominas-Murtra:2019}. 

Second, we suspect our measures of copy and transformation information 
have further connections to existing formal treatments in information theory, in particular 
rate-distortion theory~\cite{cover_elements_2012}, whose connections we started to explore here. We also believe that our decomposition may be generalizable beyond Bayesian surprise and mutual information to include other information-theoretic measures, including conditional mutual information and multi-information.  Decomposing conditional mutual information is of particular interest, since it will permit a decomposition of the commonly-used \emph{transfer entropy}~\cite{Schreiber:2000} measure into  copy and transformation components, thus separating two different modes of dynamical information flow between systems.

Finally, we point out that our proposed decomposition 
has some high-level similarities 
to other recent proposals for information-theoretic decomposition, such as the  ``partial information decomposition''  of multivariate information into redundant and synergistic components~\cite{williams2010nonnegative}, integrated information decompositions~\cite{kahle2009complexity,oizumi2016unified}, and decompositions of mutual information into ``semantic'' (valuable) and ``non-semantic'' (non-valuable) information~\cite{kolchinsky2018semantic}.  We also mention another recent proposal for an alternative information-theoretic notion of ``copying''~\cite{mediano2019beyond}, in which copying is said to occur in a multivariate system when information that is present in one variable spreads to other variables (regardless of any transformations that information may undergo).  Further research should explore if and how the decomposition proposed in this paper  relates to these other approaches.



\vspace{10pt}

\section*{Acknowledgments}
AK was supported by Grant No.
FQXi-RFP-1622 from the FQXi foundation, and Grant No. CHE-1648973 from the U.S. National Science Foundation. AK would like to thank the Santa Fe Institute for supporting this research. 
The authors thank Jordi Fortuny, Rudolf Hanel, Joshua Garland, and Blai Vidiella for helpful discussions, as well as the anonymous reviewers for their insightful suggestions.
 
\bibliography{ref}

\ifappendix
\appendix
\renewcommand{\thesection}{\Alph{section}}

\counterwithin{thm}{section}
\counterwithin{lem}{section}
\counterwithin{prop}{section}

\section{$\Dcopy$ satisfies the four axioms}
\label{app:satisfies}
$\Dcopy$ 
satisfies \cref{axiom:Nonnegative} by non-negativity of KL.

It satisfies \cref{axiom:UpperBound} when $\postx > \priorx$ because $\bKL(\postx, \priorx) \le \DPP$ by the data processing inequality for KL divergence~\cite[Lemma 3.11]{csiszar_information_2011}. Otherwise, when $\postx \le \priorx$, $\Dcopy$ vanishes and thus satisfies  \cref{axiom:UpperBound} trivially.  

It satisfies \cref{axiom:Monotonicity} when $\postx \le \priorx$ because in that case $ \Dcopy(\post \Vert \prior) = 0 \le \Dcopy(\postb \Vert \prior)$. If $\postx \le \priorx$, then note that the derivative of $\bKL(a,b)$ with respect to $a$ is
$\frac{d}{da}\bKL(a,b) = \log \frac{a}{b} - \log \frac{1-a}{1-b}$, 
which is strictly positive when $a>b$. Thus, $ \Dcopy(\post \Vert \prior) \le \Dcopy(\postb \Vert \prior)$.

Finally, we show that $\Dcopy$ satisfies \cref{axiom:purelycorrect}.
For any prior distribution $\prior$, define the following posterior distribution $\post^\alpha(y)$:
\begin{align}
\post^\alpha(y) = \begin{cases}
\alpha & \text{if $y=x$}\\
\frac{1-\alpha}{1-\priorx}\prior(y) & \text{if $y\ne x$}
\end{cases} \eqeol,
\label{eq:alphapost}
\end{align}
where $\alpha$ is a parameter that can vary from $\priorx$ to 1.  It is easy to verify that for all $\alpha$,
\begin{align}
\label{eq:alpha2}
\DKL(\post^\alpha \Vert \prior) = \bKL(\alpha, \priorx) = \Dcopy(\post^\alpha \Vert \prior)\eqeol,
\end{align}
and that $ \Dcopy(\post^\alpha \Vert \prior)$ ranges in a continuous manner from 0 (for $\alpha = \priorx$) to $-\log \priorx$ (for $\alpha=1$).



\section{Proof of Theorem~\ref{thm:unique}}
\label{app:proofs}

Before proceeding, we first prove two useful lemmas.

\begin{restatable}{lem}{lemone}
Given \cref{axiom:Monotonicity}, $\RR(\post, \prior,x)=\RR(\postb, \prior,x)$ if $\postx=\postbx$. 
\label{lem:R-function-of-Pe}
\end{restatable}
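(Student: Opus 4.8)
The plan is to read \cref{axiom:Monotonicity} as a one-sided implication and exploit it symmetrically. The axiom states that whenever $\postx \le \postbx$ (with the prior $\prior$ held fixed), we have $F(\post, \prior, x) \le F(\postb, \prior, x)$. The lemma concerns the boundary case in which the two posteriors assign exactly the same probability to the correct message, $\postx = \postbx$, and asserts that the copy information is then identical.

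First I would observe that the single equality $\postx = \postbx$ simultaneously certifies both weak inequalities $\postx \le \postbx$ and $\postbx \le \postx$. Applying \cref{axiom:Monotonicity} to the first of these yields $F(\post, \prior, x) \le F(\postb, \prior, x)$. Applying the same axiom after swapping the roles of $\post$ and $\postb$ (which is legitimate precisely because $\postbx \le \postx$ also holds) yields the reverse inequality $F(\postb, \prior, x) \le F(\post, \prior, x)$. Combining the two inequalities forces $F(\post, \prior, x) = F(\postb, \prior, x)$, which is the claim.

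There is essentially no technical obstacle here; the only subtlety worth flagging is that \cref{axiom:Monotonicity} is phrased as a single directional inequality rather than as a statement that $F$ ``depends only on $\postx$'', so the entire content of the argument is recognizing that an equality of the scalar arguments licenses invoking monotonicity in both directions at once. Conceptually, the lemma records the useful fact that, for a fixed prior, any admissible measure $F$ collapses to a function of the single scalar $\postx$. This reduction is what I expect to make the subsequent uniqueness analysis of \cref{thm:unique} tractable, since it lets one reason about $F$ as a one-dimensional object of its correct-copy probability rather than as a functional of the entire posterior distribution.
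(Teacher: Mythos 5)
Your proof is correct and is exactly the argument the paper gives: the paper's one-line proof (``Follows from applying \cref{axiom:Monotonicity} in both directions'') is precisely your observation that $\postx = \postbx$ licenses invoking the monotonicity axiom with the roles of $\post$ and $\postb$ swapped, yielding both inequalities and hence equality. Your write-up simply spells out the details the paper leaves implicit.
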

\begin{proof} 
Follows from applying \cref{axiom:Monotonicity} in both directions.
\end{proof}

\begin{restatable}{lem}{lemweak}
Given \cref{axiom:Monotonicity,axiom:UpperBound,axiom:Nonnegative}, if $\postx \le \priorx$, then $\RR(\post,\prior,x)=0$.
\label{lem:weakchannel}
\end{restatable}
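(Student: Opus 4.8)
The plan is to first pin down the value of $\RR$ in the special case where the posterior coincides with the prior, and then propagate this value to all $\postx \le \priorx$ using monotonicity. The whole argument is short; the only idea is to use the prior itself as a comparison posterior.

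First I would evaluate $\RR(\prior, \prior, x)$, treating the prior distribution $\prior$ as a (legitimate) posterior---recall that $\post$ is allowed to be any distribution whatsoever. Since $\DKL(\prior \Vert \prior) = 0$, \cref{axiom:UpperBound} gives $\RR(\prior, \prior, x) \le \DKL(\prior \Vert \prior) = 0$, while \cref{axiom:Nonnegative} gives $\RR(\prior, \prior, x) \ge 0$. Together these force $\RR(\prior, \prior, x) = 0$.

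Next, for an arbitrary posterior $\post$ with $\postx \le \priorx$, I would apply \cref{axiom:Monotonicity} taking $\prior$ in the role of the comparison distribution $\postb$, so that $\postbx = \priorx$. Because $\postx \le \priorx = \postbx$, monotonicity yields $\RR(\post, \prior, x) \le \RR(\prior, \prior, x) = 0$. Combining this with \cref{axiom:Nonnegative}, which gives $\RR(\post, \prior, x) \ge 0$, I conclude $\RR(\post, \prior, x) = 0$, as claimed.

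I do not anticipate any genuine obstacle: the key (and only) insight is that the prior can serve as an anchor posterior, pinning $\RR(\prior,\prior,x)=0$ via the upper and lower bounds, after which \cref{axiom:Monotonicity} does the remaining work. Note that \cref{lem:R-function-of-Pe} is not strictly required for this argument, although it makes the comparison transparent by confirming that $\RR$ depends on the posterior only through $\postx$.
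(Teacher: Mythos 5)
Your proof is correct and follows essentially the same route as the paper's: both use the prior itself as the comparison posterior, pin $\RR(\prior,\prior,x)=0$ via \cref{axiom:UpperBound} (since $\DKL(\prior\Vert\prior)=0$) together with \cref{axiom:Nonnegative}, and then invoke \cref{axiom:Monotonicity} to conclude. The only difference is the order in which the two steps are stated, which is immaterial.
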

\begin{proof}
If $\postx \le \priorx$, then $F(\post, \prior,x) \le F(\prior, \prior,x)$ by \cref{axiom:Monotonicity}.  By \cref{axiom:UpperBound}, $F(\prior,\prior,x) \le \DKL(\prior \Vert \prior)=0$.  Combining gives $F(\post, \prior,x) \le 0$, while $F(\post, \prior,x)\ge 0$ by \cref{axiom:Nonnegative}.
\end{proof}

We then show that $\Dcopy$ is the largest  possible measure that satisfies \cref{axiom:Nonnegative,axiom:UpperBound,axiom:Monotonicity}.

\begin{restatable}{prop}{propone}
Any  $F$ which satisfies \cref{axiom:Nonnegative,axiom:UpperBound,axiom:Monotonicity} must obey $F(\post, \prior,x) \le \DcopyPP$.
\label{prop:upperbound}
\end{restatable}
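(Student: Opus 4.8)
The plan is to exploit the two lemmas already in hand, together with the explicit ``pure-copy'' posterior $\post^\alpha$ constructed in \cref{eq:alphapost}. The central observation is \cref{lem:R-function-of-Pe}: under \cref{axiom:Monotonicity}, the value $F(\post,\prior,x)$ depends on the posterior only through the scalar $\postx$. This reduces the claimed inequality to a one-dimensional comparison and, crucially, lets me replace an arbitrary $\post$ by any other posterior sharing the same value at $x$.

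First I would dispose of the easy case $\postx \le \priorx$. Here \cref{lem:weakchannel} gives $F(\post,\prior,x)=0$, while by definition $\DcopyPP=0$ as well, so the inequality holds (with equality). It then remains to treat $\postx > \priorx$, where $\DcopyPP=\bKL(\postx,\priorx)$ and necessarily $\priorx<1$.

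For the main case, the key step is to route the bound through the extremal posterior. Setting $\alpha := \postx \in (\priorx,1]$, consider $\post^\alpha$ from \cref{eq:alphapost} (which is a valid distribution precisely for $\alpha$ in this range); by \cref{eq:alpha2} its Bayesian surprise collapses to the binary KL, $\DKL(\post^\alpha \Vert \prior)=\bKL(\alpha,\priorx)$. Applying \cref{axiom:UpperBound} to $\post^\alpha$ then yields $F(\post^\alpha,\prior,x)\le \DKL(\post^\alpha\Vert\prior)=\bKL(\postx,\priorx)$. Since $\post^\alpha(x)=\alpha=\postx$, \cref{lem:R-function-of-Pe} gives $F(\post,\prior,x)=F(\post^\alpha,\prior,x)$, and chaining the two facts delivers $F(\post,\prior,x)\le \bKL(\postx,\priorx)=\DcopyPP$.

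I do not anticipate a genuine obstacle: the entire content sits in the setup rather than in any calculation. The one point requiring care is recognizing that \cref{axiom:UpperBound} applied directly to a general $\post$ gives only the weak bound $F\le\DPP$ (and $\DPP$ can far exceed $\bKL(\postx,\priorx)$); the sharp bound emerges only after using \cref{lem:R-function-of-Pe} to transport the problem onto $\post^\alpha$, the distribution at that value of $\postx$ whose full KL divergence equals the binary KL.
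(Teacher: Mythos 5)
Your proof is correct and follows essentially the same route as the paper's: dispose of the case $\postx \le \priorx$ via \cref{lem:weakchannel}, then use \cref{lem:R-function-of-Pe} to replace $\post$ by the extremal posterior $\post^\alpha$ of \cref{eq:alphapost} with $\alpha=\postx$, and apply \cref{axiom:UpperBound} together with \cref{eq:alpha2} to conclude $F(\post,\prior,x)\le \bKL(\postx,\priorx)=\DcopyPP$. The paper's proof is exactly this argument, so no further comment is needed.
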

\begin{proof} 
Given \cref{lem:weakchannel}, without loss of generality we restrict our attention to the case where $\postx > \priorx$. 
Define the posterior $\post^\alpha$ as in \cref{eq:alphapost}, while taking $\alpha=\postx$. 
Then, by \cref{lem:R-function-of-Pe},
$$F(\post, \prior,x)  = F(\post^\alpha, \prior,x) \eqeol.$$
At the same time, 
\begin{multline*}
F(\post^\alpha, \prior,x) \le \DKL(\post^\alpha \Vert \prior)\\ = \bKL(\postx \Vert \priorx) = \DcopyPP,
\end{multline*}
where the first inequality follows from \cref{axiom:UpperBound}, and the second equality from \cref{eq:alpha2}.
\end{proof}

We are now ready to prove the main result from \cref{sec:AxiomsDI}.

\begin{proof}[Proof of \cref{thm:unique}]
Consider some $\post, \prior, x$, and assume $\postx > \priorx$ (without loss of generality by \cref{lem:weakchannel}). By \cref{axiom:purelycorrect}, there must exist a posterior $\postb$ such that $\postbx = \postx$ and 
\begin{align}
F(\postb, \prior, x) = \DKL(\postb \Vert \prior).
\label{eq:fd0}
\end{align}
Note that by the data processing inequality for KL divergence, $\DKL(\postb \Vert \prior) \ge  \Dcopy(\postb \Vert \prior)$.  

Then, by \cref{lem:R-function-of-Pe}, $F(\post, \prior, x)=F(\postb, \prior, x)$ since $\postx = \postbx$.  Similarly, it can be verified that $ \Dcopy(\postb \Vert \prior) =  \DcopyPP$.  Combining the above results shows that $F(\post, \prior, x) \ge \DcopyPP$. The theorem follows by combining with \cref{prop:upperbound}.

\end{proof}



\section{Axiomatic derivation and solution of Eq.~\ref{eq:genloss}}
\label{app:general}

\global\long\def\lxy{\ell(x,y)}%
\global\long\def\lxY{\ell(x,Y)}%
\global\long\def\dl{{\textstyle \frac{d}{d\lambda}}}%

\subsection{Axiomatic derivation}
\label{app:general-ax}

We first demonstrate that the generalized copy information defined in \cref{eq:genloss}, $\DcopyGenPP$, 
is the unique measure that satisfies \cref{axiom:Nonnegative,axiom:UpperBound} and our modified \cref{ax:mono2,ax:purely2}.  Our derivation  has the same structure as the one in \cref{app:proofs}, and we proceed more quickly. 

First, we verify that $\DcopyGen$ satisfies the four axioms.  It satisfies \cref{axiom:Nonnegative} by non-negativity of KL.  It satisfies \cref{axiom:UpperBound} because $\post$ falls within the feasibility set of \cref{eq:genloss}, therefore the minimum $\DcopyGenPP$ has to be less than or equal to $\DPP$.  It satisfies \cref{ax:mono2} because $\mathbb{E}_{\post}[\ell(x,Y)] \ge \mathbb{E}_{\postb}[\ell(x,Y)]$ means that the feasibility set of \cref{eq:genloss} for $\postb$ is a subset of the feasibility set for $\post$, so the minimum $\DcopyGenPPb$ has to be greater than or equal to the minimum $\DcopyGenPP$. To show that it satisfies \cref{ax:purely2}, note that the distribution $w_Y$ which optimizes \cref{eq:genloss} will achieve $\mathbb{E}_{w_Y}[\ell(x,Y)] = \mathbb{E}_{\post}[\ell(x,Y)]$ whenever $\mathbb{E}_{\post}[\ell(x,Y)] \le \mathbb{E}_{\prior}[\ell(x,Y)]$~\cite[pp.299-300]{rubinstein_simulation_2016}. Note also that $ \mathbb{E}_{\post}[\ell(x,Y)]$ can vary from $\min_y \ell(x,y)$ (for $\post(y|x)=\delta(y,\argmin_{y'} \ell(x,y')$) to $\mathbb{E}_{\prior}[\ell(x,Y)]$ (for $\post = \prior$).

We now demonstrate that $\DcopyGen$ is the unique measure that satisfies the four axioms. 
We begin by showing that $\RR(\post, \prior, x) \le \DcopyGenPP$ for any $\RR$.  Given a choice of $\post$, $\prior$, and $x$, let $w_Y$ be the solution to \cref{eq:genloss}, so
\begin{align}
\label{eq:subbf}
\DcopyGenPP = \DKL(w_Y \Vert \prior) .
\end{align}
Given the definition of $\DcopyGen$, 
$\mathbb{E}_{w_Y}[\ell(x,Y)] \le \mathbb{E}_{\post}[\ell(x,Y)]$. Then, by \cref{ax:mono2}, \cref{axiom:UpperBound}, and \cref{eq:subbf}, 
\begin{multline*}
\RR(\post, \prior,x) \le \RR(w_Y, \prior,x) \\
\le  \DKL(w_Y \Vert \prior) = \DcopyGenPP.
\end{multline*}

We finish by showing that $\RR(\post, \prior, x) \ge \DcopyGenPP$ for any $\RR$. 
First consider the case $\mathbb{E}_{\post}[\ell(x,Y)] \ge \mathbb{E}_{\prior}[\ell(x,Y)]$. Then, $\DcopyGenPP = 0$ by construction, and therefore $\RR(\post, \prior, x) \ge \DcopyGenPP$ by \cref{axiom:Nonnegative}. 

When $\mathbb{E}_{\post}[\ell(x,Y)] < \mathbb{E}_{\prior}[\ell(x,Y)]$,  
by \cref{ax:purely2} there must exist a posterior $\postb$ such that $\mathbb{E}_{\postb}[\ell(x,Y)] = \mathbb{E}_{\post}[\ell(x,Y)] $ and 
\begin{align}
\RR(\postb, \prior, x) = \DKL(\postb \Vert \prior).
\label{eq:nfd0}
\end{align}
Then, by definition of $\DcopyGen$, 
\begin{align}
\DKL(\postb \Vert \prior) \ge \DcopyGenPP.
\label{eq:nfd1}
\end{align}
Finally, by \cref{ax:mono2},
\begin{align}
\RR(\post, \prior, x) & \ge \RR(\postb, \prior, x)\label{eq:nfd2}
\end{align}
Combining \cref{eq:nfd2}, \cref{eq:nfd0}, and then \cref{eq:nfd1} shows that $\RR(\post, \prior, x) \ge \DcopyGenPP$. 

Thus, $\DcopyGen$ is the unique measure that satisfies \cref{axiom:Nonnegative,axiom:UpperBound} and our generalized \cref{ax:mono2,ax:purely2}.

\subsection{$\Dcopy$ as the solution to Eq.~\ref{eq:genloss} for the 0-1 loss function}
\label{app:derivation}
Consider the optimization problem:
\begin{align}
\min_{{r_Y \in \SS:\,r_Y(x) \ge \postx }} \DKL(r_Y \Vert \prior)  \eqeol.
\label{eq:appopt}
\end{align}
When $\priorx \ge \postx$, then the solution $r_Y = \prior$ satisfies the constraint and achieves $\DKL(\prior \Vert \prior) = 0$, the minimum possible.  When $\priorx < \postx$, we use the chain rule for KL divergence~\cite{cover_elements_2012} to write
\begin{multline*}
\DKL(r_Y \Vert \prior)  = \bKL(r_Y(x), \priorx)  +\\
(1-r_Y(x)) \DKL(r_Y(Y\vert Y \ne x)\Vert \prior(Y\vert Y \ne x)) \eqeol.
\end{multline*} 
The second term is minimized by setting $r_Y(y) \propto \prior(y)$ for $y\ne x$, so that $r_Y(y\vert Y\ne x)=\prior(y\vert Y \ne x)$ and $\DKL(r_Y(Y\vert Y \ne x)\Vert \prior(Y\vert Y \ne x))=0$.  Thus, in the case that $\priorx < \postx$, we have reduced the optimization problem of \cref{eq:appopt} to the equivalent problem
\begin{align}
\min_{a \in [\postx,1]} \bKL(a, \priorx)  \eqeol.
\label{eq:appopt2}
\end{align}
Note that the derivative $\bKL(a,b)$ with respect to $a$ is
$\frac{d}{da}\bKL(a,b) = \log \frac{a}{b} - \log \frac{1-a}{1-b}$, 
which is strictly positive when $a>b$. 
Given the assumption that $\postx > \priorx$, \cref{eq:appopt2} is minimized by $a = \postx$.  Thus, $\bKL(\postx, \priorx)$ is the solution to \cref{eq:appopt} when $\priorx < \postx$.

Combining these two results shows that $\DcopyPP$, as defined in \cref{eq:DI2}, is the solution to \cref{eq:appopt}. 

\subsection{Vector-valued loss functions}
\label{app:general-vector}

One can also generalize the approach described in \cref{sec:general} to vector-valued loss functions, $\ell : \mathcal{X} \times \mathcal{Y} \to \mathbb{R}^n$, where we use $\mathcal{X}$ and  $\mathcal{Y}$ to indicate the sets of outcomes of $X$ and $Y$ respectively (recall that these can be different, in the context of our generalized copy and transformation information measures).  As we'll see below, one application of vector-valued loss functions is to define measures of copy and transformation information that are additive when independent channels are concatenated.

We first discuss which axioms might be expected to hold for generalized copy information measures with vector-valued loss functions.  \cref{axiom:Nonnegative} and \cref{axiom:UpperBound} do not make reference to the loss function, and remain unmodified. \cref{ax:mono2} is still meaningful, as long as the inequality $\mathbb{E}_{\post}[\ell(x,Y)] \ge \mathbb{E}_{\postb}[\ell(x,Y)]$ is taken in an element-wise fashion.  \cref{ax:purely2} should be dropped for vector-valued functions, for reasons explained below.

Using the derivation found in \cref{app:general-ax}, it can be shown that the largest measure which satisfies \cref{axiom:Nonnegative}, \cref{axiom:UpperBound}, and \cref{ax:mono2} for a vector-valued loss function  is given by
\begin{align}
\label{eq:genlossVV}
&\DcopyGenPP := \min_{r_Y}\;  \DKL(r_Y \Vert \prior)\\
&\qquad \text{s.t.} \quad 
\mathbb{E}_{r_Y}[\ell_i(x,Y)] \le \mathbb{E}_{\post}[\ell_i(x,Y)] \text{ for $i=1..n$}, \nonumber
\end{align}
where $\ell_i$ indicates the $i^\mathrm{th}$ component of the loss function $\ell$. \cref{eq:genlossVV} is a  minimum cross-entropy problem with $n$ different constraints.  The general solution to this problem will have the following form~\cite{rubinstein_simulation_2016}:
\begin{align}
w(y) =  \frac{1}{Z(\lambda_1,\dots,\lambda_n)} p_Y(y) e^{-\sum_i \lambda_i \ell_i(x,y)}, \label{eq:multsol}
\end{align}
where $\lambda_i \ge 0$ is the Lagrange multiplier for constraint $i$ and $Z(\lambda_1,\dots,\lambda_n)$ is a normalization constant.  The Lagrange multipliers can be found by using standard convex optimization techniques.  Note that all $\lambda_i=0$ if $\mathbb{E}_{\post}[\ell_i(x,Y)] \ge  \mathbb{E}_{\prior}[\ell_i(x,Y)]$ for all $i$, in which case $w_Y = p_Y$.  Even if $\mathbb{E}_{\post}[\ell(x,Y)] <  \mathbb{E}_{\prior}[\ell(x,Y)]$, however, it may be impossible to make all of the constraints simultaneously tight  up to equality. In other words, it will not always be the case that 
$\mathbb{E}_{w_Y}[\ell_i(x,Y)] = \mathbb{E}_{\post}[\ell_i(x,Y)]$ for all $i=1..n$, and some (but not necessarily all) of the multipliers $\lambda_i$ will be equal to 0.  For this reason, \cref{ax:purely2} is not generally achievable for copy information defined with vector-valued loss functions, and we drop it from our requirements. This means $\DcopyGen$, as defined in \cref{eq:genlossVV}, is not the unique measure which satisfies the remaining three axioms (\cref{axiom:Nonnegative}, \cref{axiom:UpperBound}, and \cref{ax:mono2}). For example, they are also satisfied by the trivial measure $F(\post,\prior,x) = 0$ for all $\post$, $\prior$, and $x$.

Vector-valued loss functions can be used to derive an additive measure of copy information.  Imagine that source and destination messages consists of sequences of $n$ symbols. If the source symbols are chosen independently, $s(x) = \prod_{i=1}^{n} s_i(x_i)$, and transmitted across $n$ independent channels, $p(y|x) = \prod_{i=1}^n p_i(y_i|x_i)$, then one can verify that the destination marginal distribution will also have a product form,
\begin{align}
\label{eq:appprod}
p(y) = \prod_{i=1}^n p_i(y_i).
\end{align}
In that case, one may desire a measure of copy information that is additive across the $n$ transmissions (see also discussion in \cref{sec:midecomp}).  This can be achieved by choosing an $n$-dimensional loss function, $\ell(x,y) = \langle  \ell_1(x_1,y_1), \ell_2(x_2,y_2), \dots, \ell_n(x_n,y_n)\rangle$.  It can be seen from \cref{eq:appprod} and \cref{eq:multsol} that the optimal distribution will have a product form, $w(y) = \prod_{i=1}^n w_i(y_i)$.  By  \cref{eq:genlossVV}, it can also be checked that the resulting copy information will have an additive form,
\begin{align}
\label{eq:genadditive}
\DcopyGenPP = \sum_{i=1}^n \DcopyGen(p_{Y_i|x_i} \Vert  p_{Y_i}) ,
\end{align}
where $\DcopyGen(p_{Y_i|x_i} \Vert  p_{Y_i})$ is the generalized copy information defined for  loss function $\ell_i(x_i, y_i)$.  Note that in this case $\DPP=\sum_i \DKL(p_{Y_i|x_i}\Vert p_Y)$. Therefore, by \cref{eq:dtransgen,eq:genadditive}, the generalized transformation information $\DtransGen$ will also be additive.

\section{Proof of Prop.~\ref{prop:dmi}} 
\label{app:purech}



Before proving \cref{prop:dmi}, we prove several intermediate results. 
We start by deriving some useful properties of the roots of the quadratic polynomial $ax^2 - (a+s)x+sc$. In particular, we consider the two roots
\begin{align}
f_\pm(a, s, c) = \frac{a+s \pm\sqrt{\left(a+s\right)^{2}-4asc}}{2a}
\label{eq:lem1c0}
\end{align}
where $a \in \mathbb{R} \setminus \{0\}$, $s \in (0,1]$, $c \in (0,1]$.

\begin{lem}
\label{lem:qbounds}
 $f_{+}(a, s, c) < 0$ when $a< 0$ and $f_+(a,s,c) \ge 1$ when $a > 0$.
 \end{lem}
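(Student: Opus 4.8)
The plan is to treat $f_+$ as the larger root of the quadratic $ax^2 - (a+s)x + sc$ and to reduce both claims to comparing the square root $\sqrt{D}$, where $D = (a+s)^2 - 4asc$ is the discriminant, against a linear expression in $a$ and $s$. The key preliminary point I would nail down first is that $D$ is nonnegative (so that $f_+$ is a well-defined real number at all); in fact I expect both desired bounds to fall out almost immediately from sharper estimates on $D$ that exploit the constraints $s \in (0,1]$ and $c \in (0,1]$.

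For the case $a > 0$, I would use $c \le 1$ together with $as > 0$ to bound $4asc \le 4as$, which gives
\[
D = (a+s)^2 - 4asc \ge (a+s)^2 - 4as = (a-s)^2 \ge 0 .
\]
Hence $\sqrt{D} \ge |a-s| \ge a-s$, and substituting into $f_+ = \frac{(a+s)+\sqrt{D}}{2a}$ with $2a>0$ yields
\[
f_+ \ge \frac{(a+s)+(a-s)}{2a} = 1 .
\]
This settles the first half.

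For the case $a < 0$, the crucial sign fact is that $asc < 0$ (since $a<0$, $s>0$, $c>0$), so that
\[
D = (a+s)^2 - 4asc > (a+s)^2 \ge 0 .
\]
This gives $\sqrt{D} > |a+s| \ge -(a+s)$, so the numerator satisfies $(a+s)+\sqrt{D} > 0$. Dividing by $2a < 0$ reverses the inequality, producing $f_+ < 0$.

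The only real subtlety — the thing I would be careful about rather than a genuine obstacle — is the sign of the denominator $2a$ when passing from a statement about the numerator to a statement about $f_+$: division preserves the inequality when $a>0$ but reverses it when $a<0$, and this is exactly what accounts for the asymmetry between the two conclusions ($\ge 1$ versus $<0$). I would also note explicitly that $D$ is nonnegative in both cases (strictly positive when $a<0$), so the square root is well-defined throughout and the root $f_+$ is genuinely real.
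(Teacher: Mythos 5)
Your proof is correct, and for the $a>0$ half it coincides with the paper's argument: both bound the discriminant below by $(a-s)^2$ via $c\le 1$, though your observation that $|a-s|\ge a-s$ lets you conclude $f_+\ge 1$ in one line, whereas the paper splits into the cases $a\ge s$ (giving $1$) and $s>a>0$ (giving $s/a>1$). For the $a<0$ half you take a genuinely different route: you show directly that the discriminant strictly exceeds $(a+s)^2$ because $-4asc>0$, hence the numerator $(a+s)+\sqrt{D}$ is strictly positive, and dividing by $2a<0$ forces $f_+<0$. The paper instead invokes Vieta's formula, $f_+f_-=sc/a<0$, so the two roots have opposite signs, and combines this with the observation that $f_+\le f_-$ when $a<0$ to conclude that $f_+$ is the negative root. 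Your version is more self-contained (no root-ordering claim, no product formula) and has the added benefit of explicitly establishing that the discriminant is nonnegative in both regimes, so $f_\pm$ are genuinely real --- a point the paper leaves implicit; the paper's Vieta argument is slicker but relies on correctly identifying which of the two roots $f_+$ is once the sign of $2a$ flips the ordering. Your careful handling of the sign of the denominator when passing from the numerator estimate to the conclusion is exactly the subtlety that the paper's root-ordering step encodes.
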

 \begin{proof}
When $a<0$, $f_{+}(a, s, c)\le f_{-}(a, s, c)$.  
Vieta's formula states that
\begin{align}
f_{-}(a, s, c)  f_{+}(a, s, c)  & = \frac{sc}{a} < 0 \label{eq:lem1c1} \eqeol.
\end{align}
This implies $f_{+}(a, s, c)<0$.  When $a>0$, we lower bound the determinant,
\begin{align}
(a+s)^{2}-4asc \ge a^{2}+2as+{s}^{2}-4as=(a-s)^{2}\,.\label{eq:lem1c3}
\end{align}
This implies 
\begin{align*}
f_+(a, s, c) &\ge \frac{a+s + |a-s|}{2a} = \begin{cases}
1 & \text{if $a \ge s$} \\
\frac{s}{a} > 1 & \text{if $s > a > 0$} 
\end{cases}
\end{align*}
 \end{proof}

 \begin{lem}
 \label{lem:qlims}
 $\lim_{a\rightarrow 0} f_-(a,s,c) = c.$
 \end{lem}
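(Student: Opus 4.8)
The plan is to recognize that as $a \to 0$ the expression $f_-(a,s,c) = \frac{a+s - \sqrt{(a+s)^2 - 4asc}}{2a}$ is an indeterminate form of type $0/0$: the denominator $2a$ vanishes, and the numerator tends to $s - \sqrt{s^2} = s - s = 0$ (using $s \in (0,1]$ so that $\sqrt{s^2}=s$). The standard and cleanest way to resolve this is to rationalize the numerator by multiplying top and bottom by the conjugate $a+s + \sqrt{(a+s)^2 - 4asc}$.

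Carrying this out, the numerator becomes $(a+s)^2 - \left[(a+s)^2 - 4asc\right] = 4asc$, and after cancelling the common factor $a$ (which is legitimate since we take a limit with $a \ne 0$) one obtains
\begin{align*}
f_-(a,s,c) = \frac{2sc}{(a+s) + \sqrt{(a+s)^2 - 4asc}} \eqeol.
\end{align*}
This rewritten form is no longer indeterminate. Taking $a \to 0$, continuity of the square root gives $(a+s) \to s$ and $\sqrt{(a+s)^2 - 4asc} \to \sqrt{s^2} = s$, so the denominator tends to $2s$, and therefore
\begin{align*}
\lim_{a\rightarrow 0} f_-(a,s,c) = \frac{2sc}{2s} = c \eqeol.
\end{align*}

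I do not expect any serious obstacle here; the only points requiring a word of care are technical. First, one should note that the radicand $(a+s)^2 - 4asc$ is positive for $a$ near $0$: it tends to $s^2 > 0$, so the square root is real and the manipulation is valid in a punctured neighborhood of $a=0$. Second, the hypothesis $s \in (0,1]$ is exactly what guarantees $\sqrt{s^2}=s$ (rather than $|s|$ with an ambiguous sign) and ensures the limiting denominator $2s$ is nonzero, so no further division issue arises. With these remarks in place the computation above is complete.
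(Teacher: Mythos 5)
Your proof is correct, and it resolves the $0/0$ indeterminate form by a different (and slightly more elementary) route than the paper. The paper applies L'H\^{o}pital's rule directly to $f_-(a,s,c)$, differentiating numerator and denominator in $a$ and evaluating at $a=0$ to get $\frac{1}{2}-\frac{s-2sc}{2s}=c$. You instead rationalize by the conjugate, which collapses the numerator to $4asc$ and yields the exact algebraic identity
\begin{align*}
f_-(a,s,c) = \frac{2sc}{(a+s) + \sqrt{(a+s)^2 - 4asc}}
\end{align*}
valid wherever the radicand is positive and $a \ne 0$; the limit then follows by continuity alone. Your version buys something the paper's does not: the rewritten form shows $f_-$ extends continuously to $a=0$ with value $c$, with no appeal to differentiability, and it is the kind of global reformulation that could also have been reused in the paper's subsequent monotonicity argument (\cref{lem:qmono}). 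Your two technical remarks --- positivity of the radicand near $a=0$ (it tends to $s^2>0$) and $s>0$ ensuring $\sqrt{s^2}=s$ and a nonzero limiting denominator --- are exactly the right points to flag, so the argument is complete.
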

 \begin{proof}
 By L'H\^{o}pital's rule,
\begin{align*}
\lim_{a\rightarrow0} f_{-}(a,s,c) &=\lim_{a\rightarrow0}\frac{\frac{d}{da}\left(a+s-\sqrt{\left(a+s\right)^{2}\!-\!4asc}\right)}{\frac{d}{da}( 2a ) }\\ 
& = \frac{1}{2}- \lim_{a\rightarrow0} \frac{2(a+s)-4sc}{2\cdot 2\sqrt{\left(a+s\right)^{2}-4asc}} \\
& =\frac{1}{2}-\frac{s-2sc}{2s} = c \eqeol.
\end{align*}
\end{proof}

\begin{lem}
\label{lem:qmono}
$f_-(a,s,c)$ is continuous and monotonically decreasing in $a$. It is strictly monotonically decreasing in $a$ when $f_-(a,s,c)<1$.
\end{lem}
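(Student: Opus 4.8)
The plan is to reduce both assertions to the sign of a single explicit derivative obtained by differentiating the defining quadratic implicitly, and to handle the singular point $a=0$ by continuity via \cref{lem:qlims}. First I would record that $f_-$ is real and well defined for every $a$: the discriminant $D:=(a+s)^2-4asc$ is non-negative. For $a<0$ we have $-4asc>0$, so $D>0$ strictly; for $a>0$ the bound \eqref{eq:lem1c3} gives $D\ge(a-s)^2\ge0$, with $D=0$ only at the single degenerate point $a=s,\ c=1$. Away from $a=0$, $f_-$ is a quotient of continuous functions with non-vanishing denominator $2a$, hence continuous; at $a=0$ I extend it by the value $c$, which is exactly the limit computed in \cref{lem:qlims}, giving continuity of $f_-$ on all of $\mathbb{R}$.

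Second, I would pin down the range $0<f_-(a,s,c)\le1$ for all $a$. Positivity for $a>0$ follows from \eqref{eq:lem1c1}: the product of roots is $sc/a>0$ and the sum $(a+s)/a>0$, so both roots are positive; for $a<0$ the product $sc/a<0$ together with $f_+<0$ from \cref{lem:qbounds} forces $f_->0$; and $f_-(0,s,c)=c>0$. The upper bound $f_-\le1$ reduces, after clearing the sign-sensitive denominator $2a$ and squaring, to $4as(1-c)\ge0$ for $a>0$ and $4as(c-1)\ge0$ for $a<0$, both of which hold because $c\le1$.

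The computational heart is Step three. I would note that $f_-$ is a root of $ax^2-(a+s)x+sc=0$ and differentiate this identity implicitly in $a$ (valid for $a\ne0$), obtaining $f_-'\,(2af_--(a+s))=f_-(1-f_-)$. Using $2af_-=a+s-\sqrt{D}$ from \eqref{eq:lem1c0}, the bracket equals $-\sqrt{D}$, so
\[
\frac{\partial}{\partial a}f_-(a,s,c)=-\frac{f_-(1-f_-)}{\sqrt{D}}.
\]
By the range bound $f_-(1-f_-)\ge0$, hence $f_-'\le0$; and since $D>0$ whenever $f_-<1$ (the only place $D=0$ is the degenerate point, where $f_-=1$), the inequality is strict exactly when $f_-<1$. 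This establishes both monotonicity claims on each of the open half-lines $(-\infty,0)$ and $(0,\infty)$.

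Finally, I would glue across $a=0$: since $f_-$ is continuous on $\mathbb{R}$ and non-increasing on each half-line, for any $a_1<0<a_2$ continuity gives $f_-(a_1)\ge f_-(0)\ge f_-(a_2)$, so $f_-$ is monotonically decreasing on all of $\mathbb{R}$. I expect the main obstacle to be the bookkeeping in Steps two and three rather than any deep idea: the denominator $2a$ changes sign, so the bound $f_-\le1$ and the sign of $2af_--(a+s)$ must each be argued separately for $a>0$ and $a<0$, and one must track the isolated degenerate point $a=s,\ c=1$ where $D=0$ to confirm that strictness fails precisely where $f_-=1$.
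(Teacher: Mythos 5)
Your proof is correct, but it takes a genuinely different route from the paper's. The paper differentiates the closed-form radical expression \cref{eq:lem1c0} directly and grinds through a chain of algebraic manipulations (multiplying by $2a^2\eta$, substituting, dividing by $\eta s$, and a final square-root bound), with the case $c=1$ split off and handled via the explicit formula $f_-=\frac{a+s-|a-s|}{2a}$. You instead differentiate the defining quadratic $ax^2-(a+s)x+sc=0$ implicitly, which collapses the computation to the single transparent identity
\begin{equation*}
\frac{\partial}{\partial a}f_-(a,s,c)=-\frac{f_-(1-f_-)}{\sqrt{D}},\qquad D=(a+s)^2-4asc,
\end{equation*}
using $2af_--(a+s)=-\sqrt{D}$; the sign is then immediate from your separately established range bound $0<f_-\le 1$ (positivity via Vieta's formulas together with \cref{lem:qbounds}, the upper bound via the sign-split squaring argument, both of which check out). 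This buys two things the paper's computation does not make explicit: the strictness condition falls out structurally ($f_-'<0$ exactly where $f_-<1$, since $D=0$ only at the isolated point $a=s$, $c=1$, where $f_-=1$), and you explicitly glue continuity and monotonicity across $a=0$ using \cref{lem:qlims}, which the paper leaves implicit even though \cref{prop:aux} uses the lemma on an interval containing $a=0$. One small gloss worth flagging: when $c=1$ the point $a=s$ is an interior point of $(0,\infty)$ where $f_-$ has a kink and your derivative formula does not apply; non-strict monotonicity through it follows from the standard fact that a continuous function with non-positive derivative off a finite set is non-increasing — you acknowledge the point, but the proof should invoke that fact (or the paper's explicit $c=1$ formula) rather than claim monotonicity on the half-lines from the derivative alone.
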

\begin{proof}
First consider the the case when $c=1$,
\begin{align*}
f_-(a,s,c) & = 
\frac{a+ s- |a-s|}{2a} = \begin{cases}
\frac{s}{a} & \text{if $a\ge s$} \\
1 & \text{otherwise}
\end{cases} 
\end{align*}
which is continuous and monotonically decreasing in $a$, and strictly so when $f_-(a,s,c)<1$ (so $a > s$). 

When $c < 1$, define the square root of the determinant
\[
\eta := \sqrt{\left(a+s\right)^{2}-4asc} \stackrel{(a)}{>} |a -s| \ge 0\eqeol.
\]
Inequality $(a)$ is strict because \cref{eq:lem1c3} is strict when $c<1$. 
Then, consider the derivative,
\begingroup
\allowdisplaybreaks
\begin{align}
& \textstyle{\frac{\partial}{\partial a}} f_-(a,s,c) \nonumber \\
&=\frac{1}{4a^{2}}\Big[\left(1-\frac{1}{2}\frac{2a+2s-4sc}{\eta}\right)2a - 2\left(a+s-{\eta}\right)\Big] \nonumber \\
 &=\frac{1}{2a^{2}}\left[-\frac{a^2+sa-2sc a}{\eta}-s+{\eta}\right] \nonumber \\
 & \propto-a^{2}-sa+2sca-s{\eta}+{\eta}^{2} \label{eq:rel1}\\
 & = s\left[ a-2ac+s-{\eta} \right]   \label{eq:rel2}\\
 & \propto\frac{a-2ac+s}{{\eta}}-1  \label{eq:rel3}\\
 & \le \frac{\left| a-2ac+s \right|}{{\eta}} - 1 \nonumber \\
 & = \sqrt{\frac{\left(a-2ac+s\right)^{2}}{{\eta}^{2}}}-1 \nonumber \\
 & =\sqrt{1-4a^{2}c\frac{1-c}{{\eta}^{2}}}-1 < 0\eqeol, \nonumber 
\end{align}
\endgroup
where in  \cref{eq:rel1} we multiplied by the (positive) term ${2a^{2}{\eta}}$,
in \cref{eq:rel2} we plugged in the definition of $\eta$ and simplified,
 and in \cref{eq:rel3} we divided by the (strictly positive) term ${\eta}s$. 
 The inequality
in the last line uses the fact that $4a^{2}c\frac{1-c}{{\eta}^{2}} > 0$ given that $a\ne 0$ and $0<c<1$, and that $\sqrt{1-x}<1$ for $x>0$.
\end{proof}


We now prove the following.
\def\astar{a^*}
\def\pax{\py^{\astar}(x)}

\begin{thm}
\label{prop:aux}
Let $\ax \in [0,1]$ indicate a set of values for all $x \in \Alphabet$. 
Then, for any source distribution $s_X$ with full support, there is a channel $p_{Y|X}$ that satisfies
\begin{equation}
\pyx =\begin{cases}
\ax & \text{if}\;x=y\\
\frac{1-\ax}{1-\py(x)}\py(y) & \text{otherwise}\eqeol,
\end{cases}\label{eq:soln-prop1A}
\end{equation}
where $\py$ is the marginal $\py(y)=\sum_{x}s(x)\pyx$. The channel $p_{Y|X}$ is unique if $\ax > 0$ for all $x$. 
Moreover, $I_p\YX=\Icopy_p\XrY$ if and only if  $\sum_x \ax \ge 1$.
\end{thm}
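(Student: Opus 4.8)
The plan is to collapse the coupled self-consistency condition --- that the channel's own output marginal $\py$ reproduce itself --- into a single scalar fixed-point equation, and then to read off everything from \cref{lem:qbounds,lem:qlims,lem:qmono}. First I would plug the claimed form \cref{eq:soln-prop1A} into $\py(y)=\sum_x s(x)\pyx$ and separate the diagonal term, obtaining $\py(y)=s(y)\ay+\py(y)\big(A-s(y)\tfrac{1-\ay}{1-\py(y)}\big)$, where $A:=\sum_x s(x)\tfrac{1-\ax}{1-\py(x)}$ is a single scalar shared across all $y$. Clearing denominators and writing $a:=1-A$ shows that each marginal value $p:=\py(x)$ must be a root of $ap^2-(a+s)p+sc=0$ with $s=s(x),\,c=\ax$ --- exactly the polynomial of \cref{eq:lem1c0}. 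Because \cref{lem:qbounds} places $f_+$ outside $[0,1]$ (it is $\ge 1$ for $a>0$ and negative for $a<0$), the only admissible assignment is $\py(x)=f_-(a,s(x),\ax)$.

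The decoupling rests on an identity extracted from the quadratic itself: $sc=(a+s)p-ap^2$ gives $s(1-c)=(1-p)(s-ap)$, hence $s\tfrac{1-c}{1-p}=s-ap$. Summing over $x$ and using $\sum_x s(x)=1$ yields $A=\sum_x\big(s(x)-a\,\py(x)\big)=1-aT$ with $T:=\sum_x\py(x)$. The defining relation $a=1-A$ therefore becomes $a(T-1)=0$: once every $\py(x)=f_-(a,s(x),\ax)$, self-consistency is \emph{equivalent} to the normalization $T=1$. The whole problem thus reduces to solving $g(a):=\sum_x f_-(a,s(x),\ax)=1$; one then sets $\py(x):=f_-(\astar,s(x),\ax)$ and reads off the channel from \cref{eq:soln-prop1A}, whose conditionals are automatically normalized once $T=1$.

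For existence I would use \cref{lem:qmono} (each $f_-$, hence $g$, is continuous and nonincreasing in $a$) together with the boundary values: \cref{lem:qlims} gives $g(0)=\sum_x\ax$, while $f_-(a,\cdot)\to 0$ as $a\to+\infty$ and $f_-(a,\cdot)\to 1$ as $a\to-\infty$ give $g(+\infty)=0$ and $g(-\infty)=|\Alphabet|$. Since $g$ runs continuously from $|\Alphabet|>1$ down to $0$, the intermediate value theorem supplies $\astar$ with $g(\astar)=1$. For uniqueness under $\ax>0$, any self-consistent marginal must, by the argument above, equal $f_-(a,\cdot)$ coordinatewise for its associated $a$ and satisfy $g(a)=1$; and $g$ is \emph{strictly} decreasing at every crossing of the level $1$, since with $|\Alphabet|\ge 2$ not all summands can equal $1$ there, so at least one lies below $1$ and is strictly decreasing by \cref{lem:qmono}. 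Hence $\astar$, and with it the channel, is unique. (The hypothesis $\ax>0$ is precisely what forces the selection of $f_-$; when some $\ax=0$ the competing root $f_+$ can re-enter $[0,1]$, which is exactly why uniqueness may fail there.)

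For the ``moreover'' claim I would observe that the posterior $\post=\pyx$ in \cref{eq:soln-prop1A} is exactly the pure-copy posterior $\post^\alpha$ of \cref{eq:alphapost} with prior $\py$ and $\alpha=\ax$. By \cref{eq:alpha2}, $\DKL(\post\Vert\py)=\DcopyPP$ --- i.e.\ $\Dtrans$ vanishes and the specific MI is pure copy --- exactly when $\ax\ge\py(x)$; whereas if $\ax<\py(x)$ the full divergence is still the binary KL but $\Dcopy=0$, so $\Dtrans>0$. Thus $I_p\YX=\Icopy_p\XrY$ iff $\ax\ge\py(x)$ for every $x$. Since $\py(x)=f_-(\astar,s(x),\ax)$ with $f_-(0,\cdot)=\ax$ and $f_-$ strictly decreasing in $a$, the inequality $\ax\ge\py(x)$ holds for all $x$ iff $\astar\ge 0$; and because $g$ is decreasing with $g(\astar)=1$ and $g(0)=\sum_x\ax$, we have $\astar\ge 0$ iff $\sum_x\ax\ge 1$, giving the stated equivalence. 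The step I expect to be the main obstacle is the reduction itself --- showing that the genuinely coupled condition $a=1-A$ splits cleanly into ``$\py(x)=f_-$ coordinatewise'' plus the scalar normalization via $s\tfrac{1-c}{1-p}=s-ap$ --- together with rigorously justifying the root selection (and its breakdown at $\ax=0$) and the $a\to\pm\infty$ limits of $g$ in the possibly countably-infinite-alphabet setting.
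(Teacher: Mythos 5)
Your proposal is correct and takes essentially the same route as the paper: the identical quadratic $a\,p^2-(a+s)p+sc=0$, the same selection of the root $f_-$ via \cref{lem:qbounds}, and the same reduction to the scalar equation $\sum_x f_-(a,s(x),c(x))=1$ solved by continuity and monotonicity (\cref{lem:qlims,lem:qmono}) plus the intermediate value theorem; your decoupling identity $s\frac{1-c}{1-p}=s-ap$ nicely makes explicit the converse consistency check (that the candidate built from $a^*$ really satisfies the defining relation $a=1-A$) which the paper leaves implicit, and your monotonicity argument for the sign claim is a valid substitute for the paper's summed identity $a\sum_x \frac{p_Y(x)-p_Y(x)^2}{s(x)}=\sum_x c(x)-1$ (\cref{eq:lll6}), while the paper bounds $a\in[-1,1]$ by explicit evaluations rather than taking $a\to\pm\infty$. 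The only substantive omission is existence when some $c(x)=0$ (which you yourself flag as the point where root selection breaks down); the paper dispatches this by solving on the reduced support $\{x:c(x)>0\}$ and setting $p_Y(x)=0$ elsewhere, with a separate two-atom marginal construction for the all-zero case.
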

\begin{proof}
We will show that there exists a marginal  $\py$ that satisfies the consistency conditions of \cref{eq:soln-prop1A}.  

We first eliminate a few edge cases. The solution is trivial for $|\Alphabet|=1$, so we assume that $|\Alphabet| \ge 1$.  
If  $\ax = 0$ for all $x$, then for any two states $x,x' \in \Alphabet$, the following is a solution: 
$p_Y(x) = s(x')/(s(x)+s(x'))$, $p_Y(x') = s(x)/(s(x)+s(x'))$, $p_Y(x'')=0$ for all $x'' \in \Alphabet \setminus \{x,x'\}$. 
If $\ax = 0$ for some but not all $x$, then the problem can be solved for the reduced outcome space $\mathcal{S}=\{ x \in \Alphabet : \ax > 0\}$, using  the procedure below. It can then be extended to all outcomes by keeping $p_Y(x)$ fixed for $x\in \mathcal{S}$ and setting $p_Y(x) = 0$ for all $x \in \Alphabet \setminus \mathcal{S}$. 
Therefore, without loss of generality, below we assume $\ax >0$ for all $x$. 





We now plug \cref{eq:soln-prop1A} into $\py(y)=\sum_{x}s(x)\pyx$, 
\begin{align}
\py(x) & =s(x)\ax +\py(x)\sum_{x':x'\ne x}s(x')\frac{1-\axx}{1-\py(x')} \eqeol.
\label{eq:lll0}
\end{align}
Define $a:=1 -\sum_{x'}s(x')\frac{1-\axx}{1-\py(x')}$ and rearrange \cref{eq:lll0} to give
\begin{align*}
0 & =s(x)\ax +\py(x)\left(-a - s(x)\frac{1-\ax}{1-\py(x)} \right) \eqeol .
\end{align*}
Multiplying both sides by $1-\py(x)$ and simplifying gives
\begin{align}
0 & =s(x)\ax-s(x)\ax \py(x)-a\py(x)+a{\py(x)}^{2}- \nonumber \\
 & \qquad\left[\py(x)s(x)-\py(x)s(x)\ax\right] \nonumber \\
 & =a{\py(x)}^{2}-(a+s(x))\py(x)+s(x)\ax \label{eq:lll4} 
 \eqeol.
\end{align}
Dividing by $s(x)$, then summing over $x$ and rearranging gives
\begin{align}
a\left[\sum_x \frac{{\py(x)-\py(x)}^{2}}{s(x)}\right]  =  \left[ \sum_x \ax\right] - 1 \,.
\label{eq:lll6}
\end{align}
Note that the sum inside the brackets on the left hand side is strictly positive. Thus,we have
\begin{align}
a \ge 0  \text{ iff } \sum_x \ax \ge 1 \quad ; \quad a < 0  \text{ iff } \sum_x \ax < 1
\label{eq:mmm0}
\end{align}
Note also that $a=0$ if $\sum_x \ax = 1$, in which case  $\py(x)=c(x)$ is the unique solution to \cref{eq:lll4} for all $x$. Below, we disregard this special case, and assume that $\sum_x \ax \ne  1$ and $a\ne 0$. 


We now solve \cref{eq:lll4} for $\py(x)$. 
First, note that  $\py(x)=\sum_{x'}s(x')p(x|x') \ge s(x) c(x) > 0$ for all $x$, since we assume that $s(x)>0$ and $c(x)>0$ for all $x$. Given that $|\Alphabet| > 1$, this also means that $\py(x) < 1$ for all $x$ (if this were not the case, then it would have to be that $\py(x) =0$ for all except one $x$). We then solve the quadratic equation,
\begin{equation}
\py^{a}(x)=\frac{a+s(x) -  \sqrt{\left(a+s(x)\right)^{2}-4a s(x) \ax}}{2a}\,,\label{eq:quadratic-1}
\end{equation}
where we include the superscript $a$ in $\py^a$ to make the dependence on
$a$ explicit. We chose the negative solution of the quadratic equation because, by \cref{lem:qbounds}, it is the only one compatible with the requirement that $0 < \py^a(x) < 1$.

We wish to find the value of $a$ satisfies $\sum_{x}\py^a(x)=1$, which 
is defined implicitly via 
\begin{align}
1 & =\sum_x \frac{a +s(x) -  \sqrt{\left(a +s(x)\right)^{2}-4 a s(x) \ax}}{2a } \label{eq:solutiona}
\end{align}
Note that  each $\py^{a}(x)$ is continuous and strictly monotonically decreasing in $a$ (\cref{lem:qmono}), and therefore so is the right hand side of \cref{eq:solutiona}. Moreover, $a$ must lie between $-1$ and $1$. To see why, evaluate the right hand side of \cref{eq:solutiona} for $a=-1$,
\begin{align*}
&	\sum_{x}\frac{1-s(x)+\sqrt{(1+s(x))^{2}+4s(x)c(x)}}{2}\\
&\ge \sum_{x}\frac{1-s(x)+(1+s(x))}{2}=n\ge1
\end{align*}
Then, evaluate it for $a=1$,
\begin{align*}
&	\sum_{x}\frac{1+s(x)-\sqrt{(1+s(x))^{2}-4s(x)c(x)}}{2}\\
& \le \sum_{x}\frac{1+s(x)-\sqrt{(1+s(x))^{2}-4s(x)}}{2}\\
& = \sum_{x}\frac{1+s(x)-(1-s(x))}{2}=\sum_{x}\frac{2s(x)}{2}=1 
\end{align*}
Thus, there is a unique $a \in [-1,1]$ that satisfies \cref{eq:solutiona}, resulting in a unique $\py^{a}$ and corresponding $p_{Y|X}$ in \cref{eq:soln-prop1A}.

Now, by definition of $\Icopy$ and the channel $p_{Y|X}$ in \cref{eq:soln-prop1A}, $I_p\YX=\Icopy_p\XrY$ if $c(x) \ge \py(x)$ for all $x$.  By \cref{lem:qlims} and \cref{lem:qmono}, the right hand side of \cref{eq:quadratic-1} is greater than $\ax$ if and only if $a \ge 0$. By \cref{eq:mmm0}, $a \ge 0$ if and only if $\sum_x \ax \ge 1$.


\end{proof}

In practice, the value of $a$ which satisfies \cref{eq:solutiona} in the proof of \cref{prop:aux} can be found by a numerical root finding algorithm, or by trying values from $-1$ to $1$ in small
intervals and selecting the first value that makes the LHS of \cref{eq:solutiona}
less than or equal to $1$. The marginal $\py$ and channel $p_{Y|X}$ can then be computed in closed
form using \cref{eq:soln-prop1A,eq:quadratic-1}.

We are now ready to prove \cref{prop:dmi}.

\begin{manualprop}{\ref{prop:dmi}}
For any source distribution $s_X$ with $H(X) < \infty$, there exist channels $p$ for all levels of  mutual information $I_p\YX \in[0,H(X)]$ such that $\Icopy_p\XrY = I_p\YX$.
\end{manualprop}
\begin{proof} 
Consider the proof of \cref{prop:aux}.  Note that for each $x \in \Alphabet$ and any $\gamma \in [0, 1]$, \cref{eq:lll4}  is satisfied by taking $\py(x) = s(x)$ and $c(x)= \gamma + s(x) - \gamma s(x)$.  

Let $p_{Y|X}^\gamma$ represent the channel corresponding to each $\gamma$, as defined in \cref{eq:soln-prop1A}.  
It is easy to check that $\Icopy_{p^\gamma}\XrY = I_{p^\gamma}\YX$, with $\Icopy_{p^\gamma}\XrY = 0$ for $\gamma=0$ and $\Icopy_{p^\gamma}\XrY = H(s_X)$ for $\gamma=1$.  Note that $c(x)$ is increases monotonically in $\gamma$ for all $x$, from $c(x) = s(x)$ for $\gamma=0$ to $c(x)=1$ for $\gamma=1$. This means that for all $\gamma$, 
\begin{align*}
\Icopy_{p^\gamma}\XrY &= \sum_x s(x) \bKL(c(x), s(x)) \\
& \le \sum_x s(x) \bKL(1, s(x))\\
& = -\sum_x s(x) \ln s(x) = H(s_X) < \infty .
\end{align*}
Thus, the sums that define $\Icopy_{p^\gamma}\XrY$ for each $\gamma$ converge uniformly, so $\Icopy_{p^\gamma}\XrY$ is  continuous in $\gamma$.   The proposition follows from the intermediate value theorem.
\end{proof}




\section{The binary symmetric channel}
\label{app:bsc}

The BSC is a channel over a two-state space ($\Alphabet = \{0,1\}$)  parameterized by a ``probability of error'' $\epsilon\in[0,1]$. The BSC can be represented in matrix form as 
\[
p_{{Y|X}}^\epsilon=\left(
\begin{array}{ll}
1-\epsilon & \epsilon  \\ 
\epsilon & 1-\epsilon
\end{array}
\right) \eqeol.
\]
When $\epsilon=0$, the BSC is a noiseless channel which copies the source without error. In this extreme case, MI is large, and we expect it to consist entirely of copy information. 
On the other hand, when $\epsilon=1$, the BSC is a noiseless ``inverted'' channel, where messages are perfectly switched between the source and the destination.
In this case, MI is again large, but we now expect it to consist entirely of transformation information.  
Finally, 
$\epsilon=1/2$ defines a completely noisy channel, for which mutual information (and thus copy and transformation information) must be 0.

For simplicity, we assume a uniform source distribution, $s_X(0)=s_X(1)=1/2$, which by symmetry implies a uniform marginal probability $p_Y(0)=p_Y(1)=1/2$ at the destination for any $\epsilon$.  
For the BSC with this source distribution,  \cref{eq:DI2} states that for both $x=0$ and $x=1$,
$\DcopyPPe = I_{p^\epsilon}\Yx$ and $\DtransPPe = 0$ when $\epsilon \le 1/2$, and $\DcopyPPe = 0$ and $\DtransPPe = I_{p^\epsilon}\Yx$ otherwise.
Using the definition of the (total) copy and transformation components of total MI, \cref{eq:Icopytotal,eq:Itransformtotal}, it then follows that 
\begin{align*}
\Icopy_{p^\epsilon}\XrY& = \begin{cases}
I_{p^\epsilon}\YX  & \text{if $\epsilon \le 1/2$} \\
0  & \text{otherwise}
\end{cases}\\
\Itransform_{p^\epsilon}\XrY& = \begin{cases}
I_{p^\epsilon}\YX  & \text{if $\epsilon \ge 1/2$} \\
0  & \text{otherwise}
\end{cases}\eqeol.
\end{align*}
This confirms intuitions about the BSC discussed in the beginning of this section.  The behavior of MI, $\Icopy\XrY$ and $\Itransform\XrY$ for the BSC with a uniform source distribution is shown visually  in \cref{Fig:Binary_Ch} of the main text.

\clearpage

\fi




\end{document}